\DeclareMathDelimiter{(}{\mathopen} {operators}{"28}{largesymbols}{"00}
\DeclareMathDelimiter{)}{\mathclose}{operators}{"29}{largesymbols}{"01}
\newtheorem{theorem}{Theorem}[section]
    \pgfplotsset{
        % use this `compat' level or higher to position the bars in one group
        % next to each other
        compat=1.7,
    }
        \pgfmathtruncatemacro{\NoOfCols}{\pgfplotsretval-1}
\begin{document}
\title{\tool: Guaranteeing Generalization in Programming by Example}

\author{Bo Wang}
\affiliation{%
  \institution{National University of Singapore}
  \country{Singapore}
}
\email{bo_wang@u.nus.edu}

\author{Teodora Baluta}
\affiliation{%
  \institution{National University of Singapore}
  \country{Singapore}
}
\email{teodora.baluta@u.nus.edu}

\author{Aashish Kolluri}
\affiliation{%
  \institution{National University of Singapore}
  \country{Singapore}
}
\email{e0321280@u.nus.edu}

\author{Prateek Saxena}
\affiliation{%
  \institution{National University of Singapore}
  \country{Singapore}
}
\email{prateeks@comp.nus.edu.sg}

\begin{abstract}
    Programming by Example (PBE) is a program synthesis paradigm in which the synthesizer creates a program that matches a set of given examples. In many applications of such synthesis (e.g., program repair or reverse engineering), we are to reconstruct a program that is close to a specific target program, not merely to produce some program that satisfies the seen examples. 
    In such settings, we wish that the synthesized program {\em generalizes} well, i.e., has as few errors as possible on the unobserved examples capturing the target function behavior. In this paper, we propose the first framework (called \tool) for PBE synthesizers that guarantees to achieve low generalization error with high probability. Our main contribution is a procedure to dynamically calculate how many additional examples suffice to theoretically guarantee generalization. We show how our techniques can be used in 2 well-known synthesis approaches: PROSE and STUN (synthesis through unification), for common string-manipulation program benchmarks. We find that often a few hundred examples suffice to provably bound generalization error below $5\%$ with high ($\geq 98\%$) probability on these benchmarks. Further, we confirm this empirically: \tool significantly improves the accuracy of existing synthesizers in generating the right target programs. But with fewer examples chosen arbitrarily, the same baseline synthesizers (without \tool) {\em overfit} and lose accuracy.
\end{abstract}

%------------------------
\begin{CCSXML}
<ccs2012>
<concept>
<concept_id>10011007.10011006</concept_id>
<concept_desc>Software and its engineering~Software notations and tools</concept_desc>
<concept_significance>500</concept_significance>
</concept>
<concept>
<concept_id>10011007.10011006.10011008</concept_id>
<concept_desc>Software and its engineering~General programming languages</concept_desc>
<concept_significance>500</concept_significance>
</concept>
</ccs2012>
\end{CCSXML}

\ccsdesc[500]{Software and its engineering~Software notations and tools}
\ccsdesc[500]{Software and its engineering~General programming languages}
%------------------------

\keywords{Program Synthesis, Generalization, Sample Complexity}

\maketitle              % typeset the header of the contribution

\section{Introduction}
\label{sec:intro}

Program synthesis is the goal of automatically generating computer programs for a given task. This vision has existed for over at least four decades~\cite{sammet1966EnglishProgram,smith1975pygmalion,summers1977methodology}. One of the mainstream approaches towards this goal is programming by example (or PBE)~\cite{gulwani2016programming}. In its simplest form, a PBE synthesizer is given access to an {\em oracle} that can generate correct input-output (I/O) examples for the unknown {\em target} program. The synthesizer has to create a candidate program as close as possible to the target program from a pre-specified {\em hypothesis space}, i.e., the space of all possible candidate programs that the synthesizer can reason about. The number of given I/O examples can vary depending on the end application, but the fewer the better. Therein lies the challenge of {\em generalization}: If examples are too few, then many possible candidate functions satisfy them, and picking one arbitrarily might yield a solution that works well only on the seen examples. In other words, the solution overfits to the seen examples and may not generalize well. How does a synthesizer create programs that are provably close to the target program? This has been a fundamental question for PBE-based program synthesis.

There are several domain-specific solutions to generalization. In program repair
as well as in inductive synthesis, for example, inferring additional specifications from observed examples that must be satisfied by the program is shown to help with generalization~\cite{laich2020guiding,jha2010oracle,goues2019apr,an2019augmented}. Allowing the synthesizer to use
more powerful oracles that adaptively craft examples or logical invariants help to synthesize correct programs~\cite{ezudheen2018ICE,ji2020question}. In neural-guided program synthesis~\cite{shin2018improving, chen2018execution, devlin2017robustfill, polosukhin2018neural}, machine learning techniques to avoid over-fitting such as regularization or structural risk minimization, are employed implicitly~\cite{vapnik2000nature}. 
In domains where we have prior knowledge about the likely distribution to which the target program belongs, synthesizers can rank solutions~\cite{singh2015predicting}, guide program search~\cite{laich2020guiding}, and use generative models for program representation or distributional priors~\cite{ellis2017learning}. 
Some synthesizers favor short programs as per Occam's razor~\cite{peter2007mdl}.

All of the above approaches, while useful, require additional knowledge or implicit assumptions about the target program beyond that captured by the original PBE problem setup. Without such assumptions, these approaches do {\em not} provide any formal guarantee that the produced program will be correct or generalize well on unseen examples. It is natural to ask: Can we guarantee generalization without making any additional domain-specific assumptions?

In this paper, we study generalization in PBE from the perspective of {\em sample complexity}: How many I/O examples should the synthesizer have to see to be confident that its selected solution is close to the target program? To answer this question, the PAC learning theory provides a starting point~\cite{valiant1984PAC,blumer1987occam}. A synthesizer generalizes well when the synthesized program is close to the target program with high confidence. The notion of confidence and closeness to the target program can be made formal using PAC learning theory.
Specifically, the synthesized program generalizes if it will make no more than a small fraction $\epsilon$ of errors on unseen examples taken independently, with high probability (at least $1-\delta$).

Our approach works on any distribution that the I/O examples are sampled from. To formally guarantee that generalization is achieved on the distribution, we need a principled design for PBE synthesizers. Existing PBE synthesizers are not designed to provide generalization guarantees; therefore, they pick the number of I/O examples to work with in an ad-hoc fashion. For instance, they may synthesize a program after seeing only $2-4$ examples~\cite{singh2015predicting,gulwani2011automating}. This paper seeks to answer the following questions: 

\paragraph{RQ1.} How many I/O examples would a synthesizer need to see in order to provably generalize?

\paragraph{RQ2.} Do existing synthesizers overfit with, say, $2-4$ examples?

As a conceptual contribution, we present the first principled framework, \tool\footnote{The tool is available with DOI number 10.5281/zenodo.4883273. The latest version can be found at \url{https://github.com/HALOCORE/SynGuar}}, to provide generalization guarantees about the synthesized programs. We propose a procedure that computes the size of the hypothesis space {\em dynamically} during synthesis, which is then used to calculate the sample size required to provably generalize.
The challenge is therefore two-fold. First, while efficiently computing the size of the hypothesis space is easy in some existing PBE synthesizers (such as the PROSE framework~\cite{polozov2015flashmeta}), for others it requires careful design. An example of the latter is the synthesis through unification (STUN)~\cite{Alur2015stun} approach.
As our main technical contribution, we present an example of integrating \tool into synthesizers based on PROSE and STUN frameworks. Specifically, we provide two PBE synthesizers for string manipulation programs that provably generalize, one implemented in the PROSE framework (\toolvsa) and one based on the STUN approach (\toolstun). To the best of our knowledge, no prior PBE synthesizer claims such strong generalization guarantees about the synthesized programs.

From an empirical perspective, our work provides the first experimental evidence for the number of examples sufficient to guarantee generalization in practice for simple string-manipulation tasks. We run \toolvsa and \toolstun on two benchmarks in this domain: 1) manually designed data-wrangling tasks similar to those used in FlashFill~\cite{polozov2015flashmeta}; and 2) the standard \sygus 2019 benchmark~\cite{sygus-benchmark}, respectively. We find that on their respective benchmarks, the tools produce programs which are provably within at most $5\%$ generalization error with a modest number of examples---around $197$ samples and $357$ examples on average, respectively---with a high probability ($\geq 98\%$). 
This observation also suggests that it is unlikely for PBE synthesizers to generalize from just $2-4$ examples without using some implicit or explicit additional knowledge. We confirm this observation by running the vanilla versions, i.e., versions with \tool disabled, on the same benchmarks with 4 randomly chosen examples or the given seed examples in the benchmark ($2-10$ in size). We find that \toolvsa generates the correct target program for $14/16$ cases from the data-wrangling task benchmark and \toolstun for $53/59$ cases from the \sygus benchmark. In contrast, the vanilla versions generate correct programs for $0/16$ and $36/59$ cases, respectively.
%These findings form our main empirical contributions.
This shows that without enough examples, synthesized programs often overfit.

Though we focus on string-manipulating programs, our approach makes minimal additional assumptions and thus can be extended to other  application domains, such as program repair~\cite{goues2019apr}, invariant discovery~\cite{blazytko2020Aurora} and so on. 
The generalization guarantee fits well in applications such as data cleaning and transformation \cite{dasu2003exploratory}, where a provable accuracy matters, or automatic stub writing in symbolic execution \cite{mechtaev2018symbolic,Shen2019NeuroSymbolicEA}, where the goal is to learn a symbolic constraint that is \emph{approximately} close enough to the target.
Our experiments suggest that the sample size to achieve generalization is task and benchmark dependent, which leaves the question of how well our presented approach works in other domains or benchmarks open. These are grounds for promising future work.

\section{Overview}
\label{sec:problem}

\begin{figure}
\centering
\begin{lstlisting}[linewidth=0.95\linewidth,
xleftmargin=7pt,
frame=single,
basicstyle=\small\ttfamily] 
language StrPROSE; 
@input string x;
@start string program := recTerm;
string recTerm := catTerm | Concat(catTerm, recTerm);
string catTerm := ConstStr(cs) | convTerm;
string convTerm := term | UpperCase(term) | LowerCase(term);
string term := SubStr(x, pos1, pos2);
int? pos := AbsPos(x, ka) | RegPos(x, kr);
string cs;  //constant string
int ka;     //absolute position
Tuple<Regex,Regex,int,int> kr; // kr=(r1,r2,k,offset)
\end{lstlisting}
\caption{A DSL for string transformation programs. \texttt{Concat} returns the string produced by concatenating two strings \texttt{catTerm} and \texttt{recTerm}, \texttt{SubStr} returns the substring between \texttt{pos1} and \texttt{pos2}.
The \texttt{UpperCase} and \texttt{LowerCase} return the string in upper case and lower case, respectively. \texttt{AbsPos} returns the absolute position of string \texttt{x}. The \texttt{RegPos} operator outputs the \texttt{k}$^{th}$ position plus an \texttt{offset} where the boundaries of the strings returned by applying regular expressions \texttt{r1} and \texttt{r2}, respectively, on string \texttt{x} match.}
\label{fig:dsl}
\end{figure}

\begin{figure*}
\begin{minipage}[b]{.48\linewidth}
\centering
%// Target program
%userdefinedwidth=0.46\textwidth,
\begin{mdframed}[
innerbottommargin=0pt,
innertopmargin=0pt,
innerleftmargin=2pt,
innerrightmargin=2pt,
%linewidth=1pt,
]
%linewidth=0.46\textwidth,
%xleftmargin=5pt,
% frame=single,
% rulecolor=\color{black},
% lineskip=-1ex,
\begin{lstlisting}[
frame=none,
lineskip=-1.6pt,
basicstyle=\small\ttfamily,
keywordstyle=\color{black}\bfseries,
morekeywords={Concat, SubStr, RegPos,AbsPos, ConstStr},
commentstyle=\color{blue}
]
// word \w+=(A-Za-z0-9)+, digit \d+=(0-9)+
// *.?= any character
Concat( // return substr until the end of 1st word match
 SubStr(x, AbsPos(x, 0), RegPos(x, (\w+, .*?, 0, 0))),
 Concat(ConstStr(","), // append ","
  Concat( // return substr of 2nd word match until the end of the word
   SubStr(x, RegPos(x, (.*?, \w+, 1, 0)),
              RegPos(x, (\w+, .*?, 1, 0))),
   Concat(ConstStr(","),
    Concat( // return the substr of 3rd word match until the end of the word
     SubStr(x, RegPos(x, (.*?, \w+, 2, 0)),
                RegPos(x, (\w+, .*?, 2, 0))),
     Concat(ConstStr(","),
      // return the substr from the 4th word until the end of the string
       SubStr(x, RegPos(x, (.*?, \w+, 3, 0)),
                 AbsPos(x, -1))))))))
\end{lstlisting}
\end{mdframed}
\end{minipage}%
\hfill
\begin{minipage}[b]{.5 \linewidth}
\centering
  %// Overfitted synthesized program
  %userdefinedwidth=0.46\textwidth,
\begin{mdframed}[
innerbottommargin=0pt,
innertopmargin=0pt,
innerleftmargin=2pt,
innerrightmargin=2pt,
%linewidth=1pt,
]
%linewidth=0.5\textwidth,
% frame=single,
% rulecolor=\color{black},
\begin{lstlisting}[
frame=none,
lineskip=-3pt,
basicstyle=\small\ttfamily,
keywordstyle=\color{black}\bfseries,
morekeywords={Concat, SubStr, RegPos,AbsPos, ConstStr},
commentstyle=\color{blue},
escapechar=!]
Concat(  // return substr of first two characters
 !\lcolorbox{yellow}{SubStr(x, AbsPos(x, 0), AbsPos(x, 2)),}! !\textcolor{red}{overfits}!
 Concat(ConstStr(","),
  Concat( // return the string from the start of the first number offset by 1 till the second last separator
   !\lcolorbox{yellow}{SubStr(x, RegPos(x, (.*?, (-?\textbackslash d+)(\textbackslash.\textbackslash d+)?, 0, 1)),}!
              !\lcolorbox{yellow}{RegPos(x, (.*?, [\textbackslash,\textbackslash.\textbackslash;\textbackslash-\textbackslash|], -2, 0))),}! !\textcolor{red}{overfits}!
   Concat(ConstStr(","), 
    Concat( // return the substr from second capital letter offset +2 to offset +3
     !\lcolorbox{yellow}{SubStr(x, RegPos(x, (.*?, [A-Z]+, 1, 2)),}!
                !\lcolorbox{yellow}{RegPos(x, (.*?, [A-Z]+, 1, 3))),}! !\textcolor{red}{overfits}!
     Concat(ConstStr(","),
      // return the substr from start of second word with offset +3 till the end
       !\lcolorbox{yellow}{SubStr(x, RegPos(x, (\textbackslash w+, .*?, 1, 3)),}!
                   !\lcolorbox{yellow}{AbsPos(x, -1))))))))}! !\textcolor{red}{overfits}!
\end{lstlisting}
\end{mdframed}
  \end{minipage}
\caption{%
On the left, we show the correct target program $t$ that tokenizes the string \texttt{x} on the boundaries of the first 4 words.
On the right, we show the program synthesized on $2$ examples which overfits at all the four \texttt{SubStr} operations.}
\label{fig:example}
\end{figure*}

\begin{figure}[ht]
    \centering
\begin{tabular*}{0.45\textwidth}[b]{c|c|c|c}
\textbf{\#Ex.} & \textbf{Input} & \textbf{Output} & \textbf{\#$f$} \\ \hline
1 & \verb+"0E-E 2|u7kuZ85"+ & \verb+"0E,E,2,u7kuZ85"+ & $\approx 10^{42}$ \\
2 & \verb+" J-3bJ.9;PPm"+ & \verb+" J,3bJ,9,PPm"+ & $\approx 10^{20}$ \\
3 & \verb+"tpJ|AV n0d7 6z"+ & \verb+"tpJ,AV,n0d7,6z"+ & $\approx 10^{12}$ \\
4 & \verb+"R 3|6VCs Q"+ & \verb+"R,3,6VCs,Q"+ & $304128$ \\
5 & \verb+"M x cSkrw ru6"+ & \verb+"M,x,cSkrw,ru6"+ & $304128$ \\
6 & \verb+"Wk  U U nZp X "+ & \verb+"Wk,U,U,nZp X "+ & $864$ \\
7 & \verb+"gsa-ub  hn lpa"+ & \verb+"gsa,ub,hn,lpa"+ & $216$ \\
8 & \verb+"RO8I3  R SuM|e "+ & \verb+"RO8I3,R,SuM,e "+ & $144$ \\
9 & \verb+"q E 0 LD0 "+ & \verb+"q,E,0,LD0 "+ & $144$ \\
10 & \verb+"dZPz T.Q s "+ & \verb+"dZPz,T,Q,s "+ & $144$ \\
11 & \verb+"Ny e87e -lO  0w"+ & \verb+"Ny,e87e,lO,0w"+ & $36$ \\
12 & \verb+"FX 1 U P 1fN"+ & \verb+"FX,1,U,P 1fN"+ & $18$ \\
%13 & Ti iiK4s 3 F7v9 & Ti,iiK4s,3,F7v9 & $18$
\end{tabular*}%
\caption{The I/O examples provided to the synthesis algorithm, along with the number of consistent candidates (\#$f$). This number drops from $\approx 10^{42}$ to $18$ in just $12$ samples.
}
\label{fig:example1}
\end{figure}

In practice, it is hard to know how many I/O examples suffice to solve a synthesis task. The number of I/O examples across various target programs, even for the same synthesizer, vary in prior works and are chosen somewhat arbitrarily. For instance, the \sygus benchmark~\cite{sygus-benchmark} has a dedicated track for the domain of string-manipulating programs. The benchmark consists of several PBE tasks and each of them is provided with a different number of I/O examples. Some have as low as $2$ examples whereas others have $50$. 

To illustrate the problem of overfitting in PBE-based synthesis, let us consider a data-wrangling task of tokenizing a given passage of text into individual words. The tokenization task involves recognizing 
the boundaries of the first 4 words and replacing the characters used to separate them with commas.
The user wishes to use a program synthesis tool to learn a program that performs this task. Here, we use our tool \toolvsa which is based on the PROSE framework~\cite{polozov2015flashmeta} for synthesizing a program. Program synthesizers
output programs in the syntax of some pre-specified {\em target language} or domain-specific language (DSL). In order to support our task, we implement a string transformation DSL in PROSE that is similar to the FlashFill DSL~\cite{polozov2015flashmeta}, see Figure~\ref{fig:dsl}. 
We give a reference implementation of our modified DSL in the supplementary material~\cite{aaasup}.
The user provides an oracle, which can be queried by the synthesizer for I/O examples exemplifying the behavior of the target program. Each I/O example is a pair of strings formed from lower and upper case letters,  digits, spaces and separators.

Given the problem setup as described above, the goal of the synthesizer reduces to learning the 4 correct substrings from the provided examples. Let us examine how well the synthesizer performs on a few I/O examples, say $2$. After running it on the first $2$ examples given in Figure~\ref{fig:example1}, as expected, the output program overfits. For instance, instead of trying to get the first substring until the end of the first word, it just picks two characters for the first word since both the examples have only two characters until their respective first separators. In fact, the synthesizer continues to overfit even after being provided $9$ examples. 

However, it turns out that for our running example, we can theoretically assert a program close to the target will be picked with high probability after $149$ examples! Our proposed algorithmic framework \tool is able to calculate this quantity on the fly and stop when enough examples are seen. To understand how it works, consider Figure~\ref{fig:example1} that shows the estimated size\footnote{A sound upper bound of the actual size is calculated.} of the hypothesis space that is {\em consistent} (or matches) with all the seen examples up to a certain point. \tool computes this quantity internally, which serves as our main technical insight. Notice that the space of the consistent programs shrinks progressively as more examples are provided. Before seeing any example, the hypothesis space is the set of all the programs our target language can represent and its size can be infinite as the DSL grammar is recursive. Now suppose the synthesizer sees the first I/O example, then the number of candidate programs which are {\em consistent} with the first example reduces considerably. The reduction depends critically on the example provided. For instance, the first I/O example shown in Figure~\ref{fig:example1} will reduce the space of consistent programs to $10^{42}$. This is still quite large---if we arbitrarily pick one program, without using any auxiliary assumptions or prior knowledge about the target program, the odds of picking the correct program are negligibly small.
However, after seeing $12$ examples, the consistent program space reduces to $18$ for our running example. 
Note that choosing a program out of these $18$ at random does not have any guarantees on its closeness to the target program. 
For a program space of $18$, to choose a program that is close to the target program with provably high confidence we require $137$ more examples. 
\tool can provably assert that it has seen enough examples to stop and return a program close to the target program after $12+137=149$ examples.

\paragraph{Problem Setup.} Similar to the setup used in existing PBE-based synthesizers, we are given an oracle to query I/O examples and a DSL for representing
the output program. Additionally, we are given  user-specified $(\epsilon,\delta)$ parameters that capture the desired generalization guarantee. The synthesizer queries the oracle for as many I/O examples as it needs and terminates with either $None$ or a synthesized function $f$. The probability that the synthesizer returns a function $f$ that might not generalize should be under the given small $\delta$.
Let $S=\{(x,~t(x))\}$ be the set of I/O examples seen by one invocation of the synthesizer. Here, each $x$ is an input drawn independently and identically distributed (i.i.d) from an unknown distribution $\distrib$ that the oracle captures.
We assume that $f$ will satisfy all given I/O examples, $\forall x\in S, f(x)=t(x)$. Note that this is different from the ``best-effort''~\cite{peleg2020perfect} or approximate synthesis approaches~\cite{Shen2019NeuroSymbolicEA} where the program $f$ is allowed to differ from $t$ on some examples in $S$. In this setup, therefore, $S$ is a random variable. The synthesizer, denoted as $\synthalgo$, is also a random variable defined over I/O samples ($S$) drawn from $\distrib$.

We seek to design PBE synthesizers that achieve generalization given by a rigorous PAC-style guarantee \cite{valiant1984PAC} while computing the required sample complexity. For an $(\epsilon,\delta)$-synthesizer, the generalization error $error(f) = Pr_{x \sim \distrib} \lbrack f(x) \neq t(x) \rbrack$ is bounded by $\epsilon$. The probability of generating $f$ with $error(f) > \epsilon$ is bounded by $\delta$.

\begin{definition}[$(\epsilon,\delta)$-synthesizer]
\label{def:eps-delta}
  A synthesis algorithm $\mathcal{A}$ with hypothesis space $H$ is an $(\epsilon, \delta)$-synthesizer with respect to a target class of
  functions $\concept$ iff for any input distributions $\distrib$, for all $t
  \in \concept$, $\epsilon \in (0, 1)$, $\delta \in (0, 1)$, given example set $S$ drawn i.i.d from the $\distrib$,
  \begin{align*}
    Pr\lbrack \synthalgo~\text{outputs}~ f \in H \text{ such that } error(f) > \epsilon \rbrack < \delta
  \end{align*} 
\end{definition}

\section{The \tool Framework}
\label{sec:overview}

\begin{algorithm}[t]
    \caption{Meta-synthesis Algorithm}
    \label{alg:meta-algo}
    \begin{algorithmic}[1]
      \Procedure{MetaSyn}{$\epsilon, \delta$}
    \While {\colorbox{lime}{stopping condition}}%\begin{tcolorbox}[width=2in,height=0.2in,colframe=white,colback=yellow]stopping condition\end{tcolorbox}}
      \State Query user for k examples
      \State \colorbox{lime}{Update the hypothesis space}
      \State \Return $None$ if empty hypothesis space
      \EndWhile
      \State Compute the $m$ examples for $(\epsilon, \delta)$ guarantee
      \State Query user for $m$ examples
      \State \colorbox{lime}{Update the hypothesis space}
      \If{empty hypothesis space}
        \State \Return $None$
      \EndIf
      \State \Return $f$ in hypothesis space
      \EndProcedure
    \end{algorithmic}
\end{algorithm}

We propose a framework with a similar algorithmic meta-structure as that of existing PBE engines.
The overall procedure is shown in Algorithm~\ref{alg:meta-algo}.
Instead of synthesizing a program after seeing a pre-determined number of I/O examples the procedure queries an oracle for new examples until it synthesizes a program that generalizes. There are two key new features in our framework: a stopping criterion and a dynamically calculated count of the number of samples to be seen. The following classical result gives us a starting point to compute the count precisely.

\paragraph{A Starting Point.} The number of examples provably sufficient to achieve the $(\epsilon, \delta)$-generalization is given by Blumer \etal~\cite{blumer1987occam}. We restate this result, which computes sample complexity as a function of $(\epsilon,\delta)$ and the capacity (or size) of any given hypothesis space $H$.

\begin{theorem}[Sample Complexity for $(\epsilon, \delta)$-synthesis]
  \label{thm:static-union-bound}
For all $\epsilon \in (0,1)$, $\delta\in
(0,1)$, and hypothesis space $H$, a synthesis algorithm $\synthalgo$ which outputs functions consistent with $m$ i.i.d samples is an $(\epsilon, \delta)$-synthesizer, if 
\begin{align*}
  m > \frac{1}{\epsilon}(\ln |H| + \ln
  \frac{1}{\delta})\\
\end{align*}
\end{theorem}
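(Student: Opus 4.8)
The plan is to prove this as a classical union-bound argument over the hypothesis space, following the structure of the Blumer \etal\ occam-razor result. The key object is the set of \emph{bad} hypotheses: call a function $f \in H$ $\epsilon$-bad if $error(f) > \epsilon$, i.e.\ it disagrees with the target $t$ on more than an $\epsilon$-fraction of inputs under $\distrib$. The synthesizer only ever outputs a function consistent with the $m$ sampled examples, so the event we must bound --- ``the synthesizer outputs some $f$ with $error(f) > \epsilon$'' --- is contained in the event ``some $\epsilon$-bad hypothesis is consistent with all $m$ samples.'' If I can show the latter has probability below $\delta$, the definition of an $(\epsilon,\delta)$-synthesizer is satisfied.

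First I would fix a single $\epsilon$-bad hypothesis $f$ and bound the probability that it survives all $m$ samples. Since $error(f) > \epsilon$, on a single sample $x \sim \distrib$ we have $Pr[f(x) = t(x)] < 1 - \epsilon$. Because the $m$ examples are drawn i.i.d., the probability that $f$ agrees with $t$ on \emph{all} of them is strictly less than $(1-\epsilon)^m$. Next I would take a union bound over every bad hypothesis: there are at most $|H|$ of them, so
\begin{align*}
  Pr[\exists\, \epsilon\text{-bad } f \in H \text{ consistent with all } m \text{ samples}] < |H| (1-\epsilon)^m.
\end{align*}
The final step is to force this quantity below $\delta$ and solve for $m$. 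Using the standard inequality $1 - \epsilon \le e^{-\epsilon}$, we get $|H|(1-\epsilon)^m \le |H| e^{-\epsilon m}$, so it suffices to require $|H| e^{-\epsilon m} \le \delta$. Taking logarithms and rearranging yields $\epsilon m \ge \ln |H| + \ln \frac{1}{\delta}$, which is exactly the stated bound $m > \frac{1}{\epsilon}(\ln |H| + \ln \frac{1}{\delta})$.

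I do not expect any single step to be a genuine obstacle, since the argument is a textbook PAC bound; the only points requiring care are bookkeeping ones. The main subtlety is making sure the containment of events is stated correctly: the theorem's guarantee must hold \emph{uniformly} over all consistent outputs the algorithm might produce, which is why the union bound ranges over all bad hypotheses rather than the particular one returned --- the algorithm's selection rule is arbitrary among consistent functions, so we cannot assume anything about which $f$ it picks. A secondary point is the use of the $1-\epsilon \le e^{-\epsilon}$ relaxation, which is what converts the exact $(1-\epsilon)^m$ bound into the clean logarithmic sample-complexity expression and accounts for the inequality being strict in the final statement.
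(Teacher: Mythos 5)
Your proof is correct and is exactly the classical union-bound argument that the paper relies on: the paper does not prove this theorem itself but restates it from Blumer et al.\ with an intuitive sketch (each $\epsilon$-far hypothesis is eliminated with probability at least $\epsilon$ per sample), which is precisely the argument you carry out in full. Your handling of the two subtle points --- bounding uniformly over all bad hypotheses rather than the particular output, and the $1-\epsilon \le e^{-\epsilon}$ relaxation --- matches the standard analysis the paper cites.
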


The above theorem is intuitively based on the following analysis. Let us say we have some initial hypothesis space $H$. After seeing one new I/O example, each hypothesis that is ``$\epsilon$-far'' from $t$ (generalization error $>\epsilon$) becomes inconsistent with some non-zero probability, and is eliminated. Therefore after seeing sufficiently many new examples, the probability of any ``$\epsilon$-far'' hypothesis being output falls below $\delta$. For details, please see the analysis \cite{blumer1987occam}.

\subsection{Key Observations \& Challenges}

We observe that Theorem~\ref{thm:static-union-bound} can be used at any point of the synthesis procedure. After seeing say the first $S$ examples, let the space of programs consistent with the examples be $\programsp$. We can plug $|\programsp|$ into Theorem~\ref{thm:static-union-bound} to compute how many more examples are sufficient to achieve the guarantee provided in Definition~\ref{def:eps-delta}.
But, there are several key technical challenges in utilizing the classical result of Theorem~\ref{thm:static-union-bound} in providing end-to-end generalization guarantees. 

First, applying this result requires being able to compute $|\programsp|$. We point out that this has not been an explicit algorithmic goal when designing existing synthesizers. Consequently, computing $|\programsp|$ is non-trivial in some of the existing synthesizers.
To tackle this, we design our own PBE synthesizer based on the STUN approach and bottom-up explicit search with the ability to compute $|\programsp|$ (see Section~\ref{sec:synguar-stun}).
Further, we show how to integrate \tool in PROSE, a synthesis framework where the size of the hypothesis space can be easily computed.

Second, $|\programsp|$ can be large and plugging in its values at the beginning leads to vacuously high sample bounds in practice. For instance, initially the hypothesis size in our running task is infinite, and even after seeing one example, the size is $10^{42}$. Therefore, instead of naively plugging in values of parameters at the beginning, we use the idea that if $|\programsp|$ decreases as the synthesizer sees more examples then the estimated sample complexity for generalization reduces as well. Therefore, in \tool's design, $|\programsp|$ is computed on the fly as the synthesizer sees more examples and the stopping condition ensures that the synthesized program generalizes.

Lastly, the PAC learning theory offers no recourse to predict how fast $|\programsp|$ reduces with more samples in practice. This question, then, becomes an empirical one: \textit{For which programs do we observe that a small number of examples are sufficient to generalize?}
Our empirical evaluation shows that for several common string manipulation tasks, the required number of examples turn out to be modest.

\paragraph{Remark.} It can be seen that our solution is a modification to the existing PBE synthesis loop, which can be instantiated for several program synthesis engines. Our proofs and analysis utilize classical sample complexity arguments, together with bounds for hypothesis space size. It may therefore appear surprising then that such calculations are not routine in prior program synthesis works already, despite accuracy / generalization being a natural objective. We believe that the challenges stated above offer an explanation as to why applying previous theoretical results is not straightforward, and requires a principled approach. We point out that the subtlety is in our problem formulation itself, namely, the use of dynamic calculation of the remaining sample size needed for generalization.

\begin{algorithm}[t]
    \caption{\tool Synthesis returns a program with error smaller than $\epsilon$ with probability higher than $1-\delta$}
    \label{alg:main}
    \begin{algorithmic}[1]
      \Procedure{\tool}{$\epsilon, \delta$}
      \State $k = 1$ // tunable parameter
      \State $g \gets \Call{PickStoppingCond}{}$ \label{alg:pick-cond}
      \State $S' \gets \varnothing, s \gets 0$
      \State $size_{H} \gets \Call{ComputeSize}{H}$ \label{alg:call-compute-init} 
      \State $n \gets g(size_{H})$
      %\While {$|S'| \leq n$}
      \While {$s \leq n$}
      \State $S' \gets S' \cup \Call{sample}{k}$
      \State $H_{S'} \gets $ \Call{UpdateHypothesis}{$S'$} \label{alg:call-update}
      %\State $H_{S'}$ = synthesize($S'$)
      \State $size_{H_{S'}} \gets \Call{ComputeSize}{H_{S'}}$ \label{alg:call-compute}
      \State \Return $None$ if $size_{H_{S'}} = 0$
      \State $s \gets s + k$
      \State $n \gets min(n, s+g(size_{H_{S'}}))$ \label{alg:min-thresh}
      \EndWhile
      \State $m_{H_{S'}}=\frac{1}{\epsilon}(\ln{size_{H_{S'}}} + \ln{\frac{1}{\delta}})$  \label{alg:validate-samplesize}
      \State $T \gets \Call{sample}{m_{H_{S'}}(\epsilon, \delta)}$
      \State $S \gets S'\cup T$ 
      \State \Return program $f$ in $H_S$ or None if  $H_S = \varnothing $
      \EndProcedure
    \end{algorithmic}
\end{algorithm}

\subsection{\tool Algorithm}
\label{sec:approach}

We start by addressing the second challenge assuming that $|\programsp|$ is computable.
Recall that, our synthesis algorithm takes as input the error tolerance parameters
$\epsilon$ and the confidence $\delta$ (Algorithm~\ref{alg:main}).
The algorithm follows the structure of Algorithm~\ref{alg:meta-algo} and consists of two phases: the sampling phase (lines $7-13$) and the validation phase (lines $14-17$).
The sampling phase addresses the second challenge by trying to shrink the hypothesis space as much as possible. In each iteration $k$ samples are taken before updating the hypothesis space that satisfies all $k$ samples seen.
The crucial part of the sampling phase is deciding when the number of examples seen so far (stored in the set $S'$ and whose cardinality is $s$) is ``enough''.
\tool stops sampling when the number of examples seen so far exceeds a stopping threshold, represented by variable $n$.
In each iteration, the stopping threshold (which depends on finite $|H|$ initially) either remains the same or it gradually shrinks with each update of the hypothesis space (line~\ref{alg:min-thresh}).
Hence, \tool ~{\em dynamically} updates the threshold based on the change in the hypothesis size.
To control how much the threshold variable shrinks with respect to the size of the updated hypothesis space $H_{S'}$, \tool picks a function $g$ (line~\ref{alg:pick-cond}). 
%While one can choose any function $g:\mathbb{N}\mapsto\mathbb{Z}$,
Our framework allows using any $g:\mathbb{N}\mapsto\mathbb{Z}$ that is monotonically non-decreasing, and we provide a sample complexity analysis for such functions. We propose a particular choice of function $g$ as the default: 
$\parf(x)=\max\{0, ~\frac{1}{\epsilon}(\ln(x)-\ln(\frac{1}{\delta}))\}$.
This $g$ has a useful property---the required number of samples it entails in the worst case cannot be more than twice the number of samples that an optimal choice of $g$ will take.
%{\em optimal} number of samples.
We will formally state and prove this optimality claim in Section~\ref{sec:analysis}.

In the second phase, in addition to the samples $S'$, \tool samples a fixed number of samples according to Theorem~\ref{thm:static-union-bound}. \tool then can return a program $f$ with provable $(\epsilon, \delta)$ guarantees (line~\ref{alg:validate-samplesize}).
Algorithm~\ref{alg:main} calls sub-procedures \textsc{UpdateHypothesis} (line~\ref{alg:call-update})
to find a program space consistent with $S'$ and \textsc{ComputeSize} (line~\ref{alg:call-compute})
to compute the size of the consistent program space. The sub-procedure \textsc{UpdateHypothesis} can be implemented by any existing PBE synthesis algorithm
which return hypotheses consistent with $S'$. Section~\ref{sec:design-synth} details how to implement \textsc{ComputeSize}, which is specific to the underlying \textsc{UpdateHypothesis} sub-procedure.

\paragraph{Running Example. } Consider the example given in Section~\ref{sec:problem}. First the user inputs $\epsilon=5\%, \delta=2\%$ respectively. In the sampling phase,
the user is queried for one example in each iteration. After the first iteration, i.e., seeing one example, the sample size for generalization ($m_{H_{S'}}$) is $2018$. Instead, \tool's sampling phase stops after seeing $n=12$ examples and the additional sample size for generalization (Line \ref{alg:validate-samplesize}) reduces to $m_{H_{S'}}=137$. The total sample size of both phases sums up to $149$ examples, which is $10\times$ less than the sample complexity after the first iteration. This is a direct consequence of \tool dynamically estimating the sample size for generalization.

\subsection{Analysis of the Algorithm}
\label{sec:analysis}

\tool's design is motivated by being able to give a formal generalization guarantee and a bounded sample complexity. For this purpose, we state and prove the following properties: 
% Here, we detail and prove these properties.

\paragraph{(P1: Termination)} \tool always terminates for a finite $|H|$.

\paragraph{(P2: $(\epsilon, \delta)$ guarantees)} The probability of \tool returning an $f$ that is $\epsilon$-far is smaller than $\delta$.

\paragraph{(P3: Sample complexity)} \tool's sample complexity is always within 2$\times$ of the optimal for $k\leq \frac{1}{2\epsilon}\ln\frac{1}{\delta}$.

\begin{theorem}[P1]
\label{thm:p1}
\tool always terminates for a finite $|H|$.
\end{theorem}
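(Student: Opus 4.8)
The plan is to observe that every part of Algorithm~\ref{alg:main} outside the \textbf{while} loop executes exactly once: the initialization, and after the loop the single computation of $m_{H_{S'}}$, the one draw of $T$, and the return of $f$. Hence the only possible source of nontermination is the loop itself, and it suffices to show that this loop performs finitely many iterations. I would establish this with a standard ranking argument: exhibit a strictly increasing loop quantity racing against a threshold that is bounded and can never grow.

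Concretely, I would track the two loop variables $s$ (the number of examples seen so far) and $n$ (the stopping threshold). First I would argue that $s$ strictly increases: each iteration executes $s \gets s + k$ with the tunable parameter $k \geq 1$, so after $i$ iterations $s = i\cdot k$, which is unbounded. Next I would argue that $n$ is monotonically non-increasing and bounded above throughout the loop. It is initialized to the integer $g(size_H)$, which is finite because $|H|$ is finite and $g:\mathbb{N}\mapsto\mathbb{Z}$. Each iteration updates it via $n \gets \min(n,\, s + g(size_{H_{S'}}))$ on line~\ref{alg:min-thresh}, and a $\min$ with the previous value can only leave $n$ unchanged or decrease it; therefore $n \leq g(size_H)$ at all times. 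Here I would also note that $size_{H_{S'}}$ is always finite, since \textsc{UpdateHypothesis} returns a consistent subspace $H_{S'}\subseteq H$, so $size_{H_{S'}}\leq|H|<\infty$ and $g(size_{H_{S'}})$ is a well-defined integer; thus the $\min$ is taken over finite quantities.

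Combining the two facts, after at most $\lceil g(size_H)/k\rceil + 1$ iterations we have $s > g(size_H) \geq n$, so the loop guard $s \leq n$ fails and the loop exits. (The degenerate case $g(size_H) < 0$ is covered as well, since then $s = 0 > n$ fails the guard immediately and the loop never runs; the early \Return on an empty hypothesis space only helps.) Because the loop body is a finite sequence of terminating calls (\textsc{sample}, \textsc{UpdateHypothesis}, \textsc{ComputeSize}) and the post-loop steps run once, the whole procedure halts.

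I do not anticipate a serious obstacle, as this is a routine variant argument; the only point needing care is verifying that $n$ cannot inflate. A reader might worry that a shrinking hypothesis space could somehow push the threshold upward, but the $\min$ on line~\ref{alg:min-thresh} rules this out \emph{unconditionally}, so the finiteness of the \emph{initial} threshold $g(|H|)$ is all that the proof genuinely relies on. The monotone shrinking of $size_{H_{S'}}$ under \textsc{UpdateHypothesis} (together with $g$ being non-decreasing) is a convenient but inessential fact that I would mention only in passing.
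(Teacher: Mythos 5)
Your proof is correct and takes essentially the same route as the paper's: both arguments show that the threshold $n$ (the paper's $N_i$) can never increase because of the $\min$ update on line~\ref{alg:min-thresh} and is therefore bounded by the finite initial value $g(|H|)$, while $s$ grows by $k$ each iteration, so the guard $s \leq n$ must fail after finitely many (roughly $g(|H|)/k$) iterations. Your side remark is also accurate: the paper additionally derives $g(|H_{S_j}|) \leq g(|H_{S_i}|)$ from the monotonicity of $g$ and the shrinking of the consistent hypothesis space, but as you observe, the $\min$ alone already prevents the threshold from inflating, so that chain is a convenience rather than a load-bearing step of the termination argument.
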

\begin{proof}
It suffices to prove that the sampling phase (lines $7-13$) of \tool terminates in order to show that \tool terminates.
%The phase can be concisely written as follows:
In each iteration of the sampling phase,  let $S_i$ be the queue storing the user-provided examples after each iteration, $z_t$ be the $t^\text{th}$ example, $S_{i+1} = S_{i}\cup \{z_{ik+1},...z_{ik+k}\}$ and $S_0=\varnothing$. For each $S_i$, $H_{S_i}$ determines the set of consistent hypothesis that satisfy $S_i$. Let $N_i$ be the limit of  the number of I/O examples $n$ for the sampling phase after iteration $i$. 
For iterations $i$ and $j$ where $i<j$ and $\forall g:\mathbb{N} \rightarrow \mathbb{Z}$ such that $g$ is monotonically non-decreasing, the following holds: 
\begin{align*}
  &S_i\subset S_j \Rightarrow |H_{S_j}|\leq |H_{S_i}| \Rightarrow g(|H_{S_j}|) \leq g(|H_{S_i}|)\\
  &N_j \leq \min\{N_i, |S_j| + g(|H_{S_j}|)\} \leq N_i \text{ (see line $13$ in Alg.~\ref{alg:main})}
\end{align*}
Therefore, if $N_0 \leq g(|H|)$ then the loop will terminate at some iteration $p$ such that $N_p < |S_p| \leq N_p + k \leq N_0 + k$.
\end{proof}

\begin{theorem}[P2]
\label{thm:p2}
The probability of \tool returning a synthesized program $f$ that is $\epsilon$-far is smaller than $\delta$.
\end{theorem}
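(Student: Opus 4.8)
The plan is to reduce the claim to the static sample-complexity bound of Theorem~\ref{thm:static-union-bound} by conditioning on everything that happens in the sampling phase. The difficulty is that both the final consistent hypothesis space $H_{S'}$ and the validation sample count $m_{H_{S'}}=\frac{1}{\epsilon}(\ln size_{H_{S'}} + \ln\frac{1}{\delta})$ are \emph{data-dependent}, so Theorem~\ref{thm:static-union-bound} cannot be invoked directly: it assumes a hypothesis space fixed \emph{before} any sample is drawn, whereas here the space is shaped by the random samples $S'$. The observation that unlocks the argument is that the validation set $T$ is drawn fresh and i.i.d.\ from $\distrib$ \emph{after} $S'$ is fixed, so $T$ is independent of $S'$, of $H_{S'}$, and of $m_{H_{S'}}$.

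Concretely, I would first fix an arbitrary realization of the sampling phase, i.e.\ condition on its terminating (guaranteed by Theorem~\ref{thm:p1}) with sample set $S'$, consistent space $H_{S'}$, and count $m_{H_{S'}}$. If $H_{S'}$ is empty the algorithm returns $None$, which is never $\epsilon$-far and so contributes nothing to the failure event; that realization may be discarded. Otherwise I would introduce the set of \emph{bad} hypotheses $B=\{h\in H_{S'} : error(h) > \epsilon\}$ and note that, since the returned $f$ lies in $H_S \subseteq \{h\in H_{S'} : h \text{ consistent with } T\}$, the output can be $\epsilon$-far only if some $h\in B$ survives validation, i.e.\ is consistent with every example of $T$.

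The core estimate is then a union bound over $B$. Because $T$ is independent of $S'$ and its inputs are i.i.d.\ from $\distrib$, any fixed $h\in B$ satisfies $Pr[h \text{ consistent with } T \mid S'] < (1-\epsilon)^{m_{H_{S'}}}$ (using $error(h)>\epsilon$), and $1-\epsilon\le e^{-\epsilon}$. Summing over the at most $size_{H_{S'}}$ members of $B$ gives a conditional failure probability of at most $size_{H_{S'}}\,(1-\epsilon)^{m_{H_{S'}}} \le size_{H_{S'}}\, e^{-\epsilon\, m_{H_{S'}}}$; substituting $m_{H_{S'}}$ makes this collapse to
\begin{align*}
  size_{H_{S'}} \cdot e^{-\epsilon \cdot \frac{1}{\epsilon}\left(\ln size_{H_{S'}} + \ln \frac{1}{\delta}\right)} = size_{H_{S'}}\cdot \frac{1}{size_{H_{S'}}}\cdot \delta = \delta .
\end{align*}

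Since this bound of $\delta$ holds uniformly for \emph{every} realization of $S'$, the final step is to remove the conditioning by the law of total probability (equivalently, taking expectation over $S'$): the unconditional probability that \tool{} outputs an $\epsilon$-far program $f$ is at most $\delta$, which is exactly P2. The main obstacle is the very first point---arguing that the data-dependence of $H_{S'}$ and $m_{H_{S'}}$ does not invalidate the union bound---and its resolution is precisely the conditioning-plus-independence step above, which lets each fixed $S'$ be treated as a bona fide instance of the fixed-hypothesis-space Theorem~\ref{thm:static-union-bound}.
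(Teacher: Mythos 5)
Your proposal is correct and takes essentially the same route as the paper: the paper's own proof simply observes that the sampling phase terminates (Theorem~\ref{thm:p1}) and that the validation examples in lines $14$--$16$ are fresh, then appeals to Theorem~\ref{thm:static-union-bound}. Your conditioning-on-$S'$, independence-of-$T$, and union-bound steps are exactly the details that appeal leaves implicit, so this is a faithful elaboration rather than a different argument.
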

\begin{proof}
By Theorem~\ref{thm:p1}, we know that the sampling phase terminates with $S'$ samples (see line $14$). In lines $14-16$ \tool samples an additional number of I/O examples required to generalize and then synthesizes a program after seeing the additional samples. Therefore, Theorem~\ref{thm:p2} follows from Theorem~\ref{thm:static-union-bound}.
\end{proof}

In order to prove the last property, we define a new quantity $\omega(Q)$. It is the smallest sample size taken by \tool($\epsilon, \delta$) for any non-decreasing $g$ used for a sequence of I/O examples $Q$.

\begin{definition}[Smallest dynamic sample size]
    For any infinite sampled sequence of examples $Q$, let $\Call{Prefix}{Q,g}$ be the prefix of $Q$ at which \tool($\epsilon, \delta$) terminates. Then,  
    % In other word, it is like that the best  stopping condition that result in smallest sample size is known before seeing the examples.
    $$
    \omega(Q) = \inf \{ |m_g|~:~\forall g,\text{ } m_g = \Call{Prefix}{Q,g}\}
    $$
\end{definition}

\begin{theorem}[P3]
\label{thm:p3}
    \tool uses no more than $2\omega(Q)$ examples on any $Q$ when the result is not None with $\parf(x)=\max\{0, ~\frac{1}{\epsilon}(\ln(x)-\ln(\frac{1}{\delta}))\}$ and $k\leq \frac{1}{2\epsilon}\ln\frac{1}{\delta}$.
\end{theorem}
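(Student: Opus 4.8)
The plan is to collapse the total sample count of \tool into a single quantity that depends only on which iteration the sampling phase stops at, and then to compare the stop chosen by $\parf$ against the best stop achievable by any non-decreasing $g$. Fix the infinite example sequence $Q$, and let $x_i$ be the size of the consistent hypothesis space after the sampling phase has seen $ik$ examples, so that $x_0=|H|$ and $x_0\ge x_1\ge\cdots\ge 1$ (the lower bound uses that the result is not None). Write $m(x)=\tfrac1\epsilon(\ln x+\ln\tfrac1\delta)$ for the validation count of line~\ref{alg:validate-samplesize} and $C=\tfrac1\epsilon\ln\tfrac1\delta$. The first step I would carry out is to show that \emph{for any} non-decreasing $g$ the run of \tool on $Q$ stops the sampling phase at some iteration $p_g$ (so it has taken $p_g k$ examples) and then draws exactly $m(x_{p_g})$ validation examples, giving total $p_g k+m(x_{p_g})$. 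Hence $\omega(Q)=\inf_g\{p_g k+m(x_{p_g})\}\ge\min_{p\ge 0}\{pk+m(x_p)\}$, and it suffices to prove the pointwise bound $T_{\parf}\le 2\,(pk+m(x_p))$ for every $p\ge 0$, where $T_{\parf}=Pk+m(x_P)$ and $P$ is the iteration at which $\parf$ stops; minimizing over $p$ and chaining with the previous inequality then yields $T_{\parf}\le 2\omega(Q)$.

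Next I would extract the two invariants enforced by the threshold update of line~\ref{alg:min-thresh}. Setting $\phi_j=jk+\parf(x_j)$ and $N_i=\min_{j\le i}\phi_j$, stopping at $P$ means $Pk>N_P$ while $(P-1)k\le N_{P-1}$. From ``did not stop earlier'' I obtain $\phi_p\ge(P-1)k$ for all $p\le P-1$; from ``stopped at $P$'' I obtain that the minimizing index $j_1\le P-1$ satisfies $\phi_{j_1}<Pk$, and since $\parf$ is monotone and $x_P\le x_{j_1}$ this gives $\parf(x_P)\le\parf(x_{j_1})<(P-j_1)k\le Pk$. I would also record the elementary uniform sandwich $\parf(x)+C\le m(x)\le\parf(x)+2C$, with equality on the left whenever $\parf(x)>0$, which follows by cases on the $\max\{0,\cdot\}$ in the definition of $\parf$.

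With these in hand the pointwise bound splits into two regimes. For $p\ge P$ the validation term dominates: $m(x_p)\ge C$ and $pk\ge Pk$ give $2(pk+m(x_p))\ge 2Pk+2C$, while $\parf(x_P)<Pk$ yields $T_{\parf}=Pk+m(x_P)\le Pk+\parf(x_P)+2C<2Pk+2C$, closing the case. For $p<P$ I use $\phi_p\ge(P-1)k$: when $\parf(x_p)>0$ the sandwich is tight, so $pk+m(x_p)=\phi_p+2C\ge(P-1)k+2C$, and comparing with $T_{\parf}\le\phi_P+2C<2Pk+2C$ the desired inequality reduces to $2k\le 2C$. This is exactly where the hypothesis $k\le\tfrac1{2\epsilon}\ln\tfrac1\delta$ is consumed, bounding the overshoot of one sampling round against the validation overhead $2C=\tfrac2\epsilon\ln\tfrac1\delta$. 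The residual case $\parf(x_p)=0$ forces $p=P-1$ and $x_P\le x_{P-1}\le\tfrac1\delta$, and I would dispatch it directly from the monotonicity $m(x_P)\le m(x_{P-1})$ rather than through the $\phi$-algebra.

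I expect the main obstacle to be the first step, namely making precise that the optimum over $g$ is exactly the envelope $\min_p\{pk+m(x_p)\}$: one must argue that the only freedom a non-decreasing $g$ has is to choose the stopping iteration, after which the validation count is forced by $x_{p_g}$, and that $\omega(Q)$ is therefore no smaller than this envelope regardless of which stops are actually realizable. The secondary source of friction is the $\max\{0,\cdot\}$ truncation in $\parf$, which breaks the tightness of the left sandwich and thus requires the separate treatment of the boundary situations $P=1$ and $p=P-1$ with $x_{P-1}\le\tfrac1\delta$; I would verify explicitly that the constant stays $2$ in each of these, again using $k\le\tfrac1{2\epsilon}\ln\tfrac1\delta$ together with $m(x)\ge C$.
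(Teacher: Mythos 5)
Your proposal is correct, and there is a caveat on comparison: the paper relegates its proof of Theorem~\ref{thm:p3} to the supplementary material, so there is no in-text proof to check you against; judged on its own merits, your argument is sound end to end and is the natural argument for this statement. In particular: reducing $\omega(Q)$ to the envelope $\min_{p\ge 0}\{pk+m(x_p)\}$ is legitimate, since any terminating run under any non-decreasing $g$ draws exactly $p_g k$ sampling examples plus $m(x_{p_g})$ validation examples, so every realized total has the envelope's form regardless of which stops are realizable (in fact all are, via constant $g$, but you only need the lower bound). Your two stopping invariants are exactly what line~\ref{alg:min-thresh} enforces: $(P-1)k\le N_{P-1}$ gives $\phi_p\ge (P-1)k$ for $p\le P-1$, and $N_P<Pk$ together with $\parf\ge 0$ forces the minimizer $j_1\le P-1$ and hence $\parf(x_P)<Pk$. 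Both regimes then close, and your algebra actually only consumes $k\le\frac{1}{\epsilon}\ln\frac{1}{\delta}$, so the stated hypothesis $k\le\frac{1}{2\epsilon}\ln\frac{1}{\delta}$ holds with room to spare; the boundary cases ($P=1$, and $\parf(x_p)=0$ forcing $p=P-1$) are dispatched correctly via $m(x)\ge C$.

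Two blemishes, both patchable. First, in the sandwich $\parf(x)+C\le m(x)\le \parf(x)+2C$, the equality when $\parf(x)>0$ is on the \emph{right} ($m(x)=\parf(x)+2C$), not the left as you wrote; your subsequent use, $pk+m(x_p)=\phi_p+2C$, relies on the correct (right-hand) tightness, so this is a mislabel rather than an error. Second, the blanket claim $x_0\ge x_1\ge\cdots\ge 1$ does not quite follow from the $\parf$-run returning non-None: that run only certifies a consistent program on its own prefix of $T=Pk+m(x_P)$ examples, so $x_p=0$ remains conceivable for $pk>T$ when the target lies outside $H$. But then any $g$-run reaching such a point returns None \emph{during sampling}, having already drawn $pk>T$ examples in total (no validation phase occurs), so $T\le 2pk$ holds trivially and the infimum is unaffected; alternatively, assuming realizability kills the case outright. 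With these repairs (and a ceiling applied to $m(\cdot)$, which only enlarges the comparison totals), the proof goes through with the constant $2$ intact.
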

Due to space limit, we provide the proof of this theorem in the supplementary material~\cite{aaasup}.

\section{Retrofitting \tool into Existing Synthesizers}
\label{sec:design-synth}

We now show how to compute $|\programsp|$, the size of the consistent program space.
A sound upper bound of $|\programsp|$ is safe to use, since in this case, our analysis shows \tool to take more examples than those needed to guarantee generalization.
We show how to compute $|\programsp|$ bounds for two well-known PBE synthesis approaches.

\subsection{\tool for the PROSE Framework}
\label{sec:synguar-prose}

We first apply \tool on top of the PROSE framework~\cite{polozov2015flashmeta}, a state-of-the-art PBE meta-synthesis framework that generates an inductive synthesizer for a given DSL. PROSE allows developers to write DSLs and specify {\em witness functions} that capture (a subset of) inverse semantics for the DSL operators. These witness functions are the drivers for the ``deductive backpropagation'' because they specify the inputs or properties of the input given an I/O example.

We implement a synthesizer named \gprose with the DSL in Figure~\ref{fig:dsl} on top of PROSE by specifying executable semantics and witness functions for its operators. Our DSL shares most operators with the DSL of FlashFill~\cite{polozov2015flashmeta}.
For the operators that differ, we detail their executable semantics and witness functions in the supplementary material~\cite{aaasup}.
Note that \tool works with any DSL expressible in PROSE, as long as each operator has its semantics and an associated witness function specified.

PROSE uses an internal succinct representation of the program space using a data structure called version-space algebra (VSA) which makes it convenient to calculate $|\programsp|$~\cite{mitchell1982generalization,lau2000version,lau2003programming,gulwani2011automating}.
A VSA is a directed graph where each node corresponds to a set of programs. The leaf nodes explicitly represent a set of programs that can be enumerated. There are two types of internal nodes: {\em union} nodes that represent a set-theoretic union and {\em join} nodes that represent $k$-ary operators which are defined by the DSL. 

\paragraph{Computing $|\programsp|$ Using VSA.}
PROSE readily computes $|\programsp|$ using a bottom-up graph traversal on its VSA. For each leaf node, it enumerates and counts the set of programs directly. For every union node, to compute the corresponding number of programs it adds up the count of all child nodes. For every join node, the number of programs is a cross product of all applications of the $k$-ary operator to $k$ parameter programs. This soundly upper bounds
$|\programsp|$.
In our implementation, we reuse the \textsf{Size} API available in PROSE, resulting in the sizes shown in Figure~\ref{fig:example}.

\paragraph{Scaling to Large Sample Size.}
Building VSA on a large number of examples can be time-consuming. Therefore, we build the VSA on a subset of the examples which lead to the same set of programs. More specifically, we take the examples one by one and drop the examples that do not decrease the VSA size.

\subsection{\tool in \gstun}
\label{sec:synguar-stun}

\stun is a well-known synthesis approach~\cite{Alur2015stun}. It was originally proposed as an extension of  the counter-example guided inductive synthesis (CEGIS) approach to synthesize program from the specification. The high-level idea is to synthesize partial solutions satisfying  
parts of the inputs and unify them.  As an instantiation of STUN for synthesizing conditional programs under PBE settings, we work with top-level\footnote{Such if-else constructs is restricted to being at the top of the function's AST.} \textsc{if-then-else} unification operator where the condition can be any boolean expression in the hypothesis space. The subsequent synthesis algorithms following this approach do {\em not} compute $|\programsp|$ directly, or make it straightforward to compute it. We design a synthesis algorithm based on this approach, and a procedure to soundly compute
an upper bound on $|\programsp|$.  We choose our target language as the \sygus string-manipulating program DSL~\cite{sygus-benchmark}. Our synthesis algorithm is referred to as \gstun.

\paragraph{Vanilla \gstun: Overview.} \gstun instantiates the previously proposed approach of bottom-up explicit search with observational equivalence reduction~\cite{albarghouthi2013recursive}. 
%\gstun can be invoked on a given set of I/O examples. 
Its algorithm consists of two phases at a high level: an enumeration phase and a unification phase. In the first phase, the synthesizer enumerates
candidate programs only by repeatedly using function application.
It clusters all candidate programs which have the same I/O behavior
on the given examples and saves only one program representative of each cluster. Such enumerated programs may only be consistent with subsets 
of the given I/O examples.
In the unification phase, \gstun composes enumerated programs with an \textsc{if-then-else} unification operator. The final synthesized program $P$, therefore, can be a straight-line program (obtained by repeated function application) or a program with nested compositions of the form $\texttt{if}~P_1\text{ } \texttt{then}~P_2 \text{ } \texttt{else}~P_3$, where $P_2$ and $P_3$ can be nested programs themselves. The nesting depth is bounded internally to limit the search space. We explain the constructional details of these phases next, and explain how to compute $|\programsp|$ from the internal data-structures later. In what follows, we denote inputs and outputs of the given I/O examples as vectors $\vec{w}$ and $\vec{o}$ respectively.

\paragraph{Vanilla \gstun: Enumeration.}
\gstun enumerates all candidate programs in a bottom-up fashion by generating programs through function application. We start with the smallest syntactic programs, which are just single components (or syntactic terminals) in the target language, working up to programs with more than one component. For instance, \texttt{concat(input0, input1)} is a candidate program with three components. The total hypothesis space without conditionals is fixed based on a user-provided constraint on the maximum component size, specified as the maximum number of components the straight-line program contains. For each program
created in the enumeration phase, we compute the outputs of the program on the given
I/O examples. Note that this step does not require explicitly individually creating and running all candidate programs---it is possible to evaluate outputs during the bottom-up construction of the programs~\cite{albarghouthi2013recursive}.

We compute $2$ useful data structures internally during enumeration. The first is a {\em consistency vector} $\vec{c}$ which captures whether an enumerated program $P$ is consistent with the given I/O examples represented by vector $\vec{w}$ and vector $\vec{o}$. Specifically, the consistency vector $\vec{c}$ for program $P$ has the $i^{th}$ element set to $\checkmark$ if the $P(\vec{w}[i]) = \vec{o}[i]$, namely, the output of $P$ on
the $i^{th}$ given input example matches the corresponding given output example. Otherwise, $\vec{c}[i]$ is set to '$\times$'. This data structure speeds up the search in two ways, conceptually. First, it is calculated on top of {\em observational equivalence}~\cite{albarghouthi2013recursive}.  If many candidate programs generate the same outputs on the given input examples, then they are all observationally equivalent, and we only need to keep one such program that has the outputs for completeness~  \cite{albarghouthi2013recursive}, thus the  consistency vector is only calculated once for those programs. Secondly, even if two programs are observationally not equivalent, and both give different incorrect outputs
for an input example, they will have the same value ($\times$) in their consistency vectors. Thus, we can effectively cluster many programs that are observationally non-equivalent but have the same consistency vector to speed up the next phase.

The second useful data structure is cluster map $\phi$. It maps consistency vectors to sets of programs. Each distinct consistency vector $\vec{c}$ computed during the enumeration phase is mapped to a set of programs $\phi(\vec{c})$ that have I/O behavior captured by $\vec{c}$.

\paragraph{Vanilla \gstun: Unification.}
In this phase, \gstun synthesizes programs with nested \textsc{if-then-else} structures. The goal is to create programs that are consistent with larger subsets of I/O examples than enumerated programs, and ideally, correct on the full set of I/O examples. The cluster map $\phi$ allows us to quickly find programs which match certain subsets of all the given I/O examples, specified by a consistency vector value. When a program $P := \texttt{if}~P_1\text{ } \texttt{then}~P_2 \text{ } \texttt{else}~P_3$ is synthesized during unification, we must carefully construct a semantically correct consistency vector for $P$, using those for sub-programs $P_2$ and $P_3$. Here, note that $P_1$ needs to be a program that evaluates to a boolean value on a given input example, say $\vec{w}[i]$. If it evaluates to \texttt{true}, then the program $P_2$ must be correct on $\vec{w}[i]$ for $P$ to be correct on $\vec{w}[i]$.
Therefore, in this case, we mark $P$ as consistent with $\vec{w}[i]$ if and only if $\vec{c}[i]$ for $P_2$ has a $\checkmark$. Analogously, if $P_1$ evaluates to \texttt{false} on $\vec{w}[i]$, then $\vec{c}[i]$ for $P_3$ should be set to $\checkmark$ for $P$ to be marked consistent with $\vec{w}[i]$; otherwise $P$ is marked inconsistent with $\vec{w}[i]$.

The above-described unification procedure synthesizes all programs with
a nesting depth of up to a pre-configured maximum (default of $2$). The nesting depth controls the hypothesis space desired by the user. The cluster map $\phi$ is updated continuously with new consistency vectors discovered in the unification phase. A successful solution is a program that matches all given examples, i.e., has a consistency vector with a $\checkmark$ for all values.

\paragraph{Computing the $|\programsp|$}. The vanilla \gstun algorithm can be slightly modified and augmented with rules shown in Table~\ref{tab:stun-count-rules} to compute the $|\programsp|$ soundly. Notice that in vanilla \gstun, when employing the observational equivalence, a program with larger components size might be discarded if there is a smaller program that has the same {\em value vector}
\footnote{The value vector
$\vec{v}$ for a program $P$ is simply the outputs of $P$ on the given
input examples, i.e., $\vec{v}[i] := P(\vec{w}[i])$ for all $i$.} 
\cite{albarghouthi2013recursive}.  But we need to count all programs at different components sizes.
To do so, we store multiple counting values (and representative programs) for different component sizes along with each value vector.

During the enumeration phase, programs are synthesized bottom-up from smallest
components size to larger ones. Let $t$ be the components size of a program. We keep track
of a $\texttt{Count}(\vec{v}, t)$ for each value vector $\vec{v}$ computed for programs
with size $t$. The $\texttt{Count}(\vec{v}, t)$ for $t=1$ (smallest base components) can be directly enumerated (rule 1 in Table \ref{tab:stun-count-rules}), since these are program input arguments or constant components in our target language. 
For $t > 1$, $\texttt{Count}(\vec{v}, t)$ can return $0$ if there is no enumerated program with components size $t$ that outputs value vector $\vec{v}$.
Same value vectors $\vec{v}$ may have different counts for different components sizes $t$, thus we enumerate on tuple $(\vec{v},t)$ rather than just $\vec{v}$. 
When \gstun uses function application to generate a new program $P'$ from programs $P_i$ (rule 2.1 in Table~\ref{tab:stun-count-rules}), the count for the value vector of the resulting $P'$ is updated by adding the product of all the counts of its arguments $P_i$ at their respective components sizes (rule 2.2 in Table~\ref{tab:stun-count-rules}).
This completes all the ways programs are compositionally created in the enumeration phase from component size 1 to the maximum component size.

After the enumeration on value vectors is finished, we have also clustered observationally non-equivalent programs based on a consistency vector $\vec{c}$ in $\phi$. 
Define $\psi[\vec{c}]$ as the set of all the value vectors that corresponds to a consistency vector $\vec{c}$, we sum up the $\texttt{Count}(\vec{v}, t)$ for every $\vec{v}$ in $\psi[\vec{c}]$ (rule 3 in Table~\ref{tab:stun-count-rules}). This way, we compute counts for consistency vectors.

During unification, programs of the form $P := \texttt{if}~P_1\text{ } \texttt{then}~P_2 \text{ }\\ \texttt{else}~P_3$ are composed. 
Here, counts of the consistency vectors of $P_2$ and $P_3$ have been computed
after enumeration if $P_2$ and $P_3$ are programs with nesting depth zero.
In this case, the $\texttt{Count}(\vec{c}, 1)$ of the program $P$ is the the product of 
$P_1$, all possible $P_2$ (0 condition) with $P_1$ as the condition, and all possible $P_3$ (0 condition) with $P_1$ as the condition, summed over all possible $P_1$. Thus, we have computed \texttt{Count} for the consistency vectors
of programs with nesting depth of $1$. Using this, we can recursively  compute counts for consistency vectors of programs with nesting depth $2$ or more (rule 5 in Table~\ref{tab:stun-count-rules}). To optimize, we memorize counts
of individual consistency vectors as well as for sets
of consistency vectors. For example, the set comprising two consistency vectors $\langle \checkmark, \times, \times \rangle$ and $\langle \checkmark, \times, \checkmark \rangle$ is succinctly represented as $\langle \checkmark, \times, \top \rangle$, and their sum of counts is memorized (rule 4 in Table~\ref{tab:stun-count-rules}). We prove that the rules in Table~\ref{tab:stun-count-rules} provide a sound upper bound of the size of the hypothesis space $|\programsp|$ in supplementary material~\cite{aaasup}.

\newcommand{\countv}{\text{Count}}
\newcommand{\cvset}{\mathcal{C}}
\newcommand{\countcset}{\text{Count}}
\newcommand{\countc}{\text{Count}}
\newcommand{\setcondb}{\mathcal{B}}
\newcommand{\condsplitthen}{\Gamma_{\text{then}}}
\newcommand{\condsplitelse}{\Gamma_{\text{else}}}

% --------------------
\newcommand{\myrowpadding}{\vspace{4pt}}
% Define typographic struts, as suggested by Claudio Beccari
%   in an article in TeX and TUG News, Vol. 2, 1993.
\newcommand\Tstrut{\rule{0pt}{2.6ex}}         % = `top' strut
\newcommand\Bstrut{\rule[-0.9ex]{0pt}{0pt}}   % = `bottom' strut
% --------------------

\begin{table}
    \caption{Count rules for \gstun. 
    We define the set of all boolean value vectors as $\setcondb$, and we use $\cvset$ to represent a succinct representation of consistency vectors, which can be a singleton or a set of consistency vectors.
    The \texttt{Count} value is calculated differently for value vector $\vec{v}$, consistency vector $\vec{c}$, or their succinct representation $\cvset$.
    }
    \footnotesize
    \label{tab:stun-count-rules}
    \centering
    \begin{tabularx}{\columnwidth}{ l } 
    \hline
    \textbf{Count Rules for Enumeration}\Tstrut\\
    \tabincell{l}{
    Starting point of enumeration:\\
    \begin{math}
    \countv(\vec{v}, 1) = \text{number of single components that output }\vec{v}
    \end{math}\\
    \begin{math}
    \countv(\vec{v}, t) = 0, \text{for } t > 1.   \quad \quad (1)
    \end{math}\\ \\
    When program $P'$ with value vector $\vec{v}'$ is enumerated at component size $t'$:\\
    Let, $P'$ = $f(P_1, P_2, ...,P_n)$ for some function component $f$ and programs $P_i$.\\
    If $t_i$ is the component size for $P_i$, \\
    and $\vec{v}_i$ is the  value vector of ($P_i$, $t_i$), $i\in\{1,...,n\}$,\\ 
    then
    $ 1 + \sum t_i = t'$ holds, and \quad \quad (2.1) \\
    \begin{math}
    \begin{aligned}
    \countv(\vec{v}', t')\leftarrow \countv(\vec{v}', t') + \prod_{i=1}^n \countv(\vec{v}_i,t_i) \quad \quad (2.2) \\ 
    % \vec{v}_f = \texttt{eval}(f, (\vec{v}_1, \vec{v}_2, ..., \vec{v}_n)), \quad \quad (2)
    \end{aligned}
    \end{math}
    }
    \myrowpadding
    \\
    \hline
    \textbf{Count Rules for Clustering}\Tstrut\\
    \tabincell{l}{
    \begin{math}
         \text{Count} (c) = \sum_{\vec{v}\in \psi[c]}\sum_t(\countv(\vec{v},t))  \quad \quad (3)
    \end{math}
    }
    \myrowpadding
    \\
    \hline
    \textbf{Count Rules for Unification}\Tstrut\\
    \tabincell{l}{
    \begin{math}
    \cvset_{\text{goal}} = \langle\checkmark, \checkmark, ..., \checkmark\rangle,
    \end{math}\\
    \begin{math}
    \text{ the count of hypothesis space up to }k\text{ conditions is } \sum_{i=0}^k \countcset(\cvset_{goal}, i)
    \end{math}\\
    \begin{math}
    \countcset(\cvset, 0) = \sum_{c\in \cvset} \countc(c) 
    \quad \quad (4)
    \end{math}\\
    \begin{math}
    \begin{aligned}
    \countcset(\cvset, i) =& \sum_{\vec{b}\in \setcondb}
    \sum_{j=0}^{i-1} (\countcset(\condsplitthen(\cvset, \vec{b}), j) \times\\
    &\countcset(\condsplitelse(\cvset, \vec{b}), i-j-1)\times ~\sum_t\countv(\vec{b}, t)) \quad \quad (5)\\
    \end{aligned}
    \end{math}\\
    \begin{math}
    \begin{aligned}
    \quad \text{where } &  \condsplitthen(\cvset, b) \text{ and } \condsplitelse(\cvset, b)
    \text{ are also succinct }\\
    &\text{ representation of consistency vectors, and} \\
    &\condsplitthen(\cvset, b)_i = 
    \begin{cases}
        \text{\checkmark} \text{  if } b_i = T\\
        \top \text{  otherwise}
    \end{cases},
    \condsplitelse(\cvset, b)_i = 
    \begin{cases}
        \text{\checkmark} \text{  if } b_i = F\\
        \top \text{  otherwise}
    \end{cases}
    \end{aligned}
    \end{math}\\
    }
    \myrowpadding
    \\
    \hline
    \end{tabularx}
\end{table}

\paragraph{\gstun Augmented with \tool.}
We call \gstun augmented with \tool as described in Section~\ref{sec:design-synth} as \toolstun. The maximum number of nested conditions during the unification phase of \gstun is $2$, and without much loss of expressiveness, we only allow the \texttt{else} branch to have nesting. With this setting, the hypothesis space of \gstun is a union of: 
\begin{itemize}
    \item \textbf{$H_0$} (The set of straight-line programs),
    \item \textbf{$H_1$} (The programs of form $\texttt{if}~P_1\text{ } \texttt{then}~P_2 \text{ } \texttt{else}~P_3$),
    \item \textbf{$H_2$} (The programs of form $\texttt{if}~P_1\text{ } \texttt{then}~P_2 \text{ } \texttt{else}~\texttt{if}~P_3\text{ } $
    \\$\texttt{then}~P_4 \text{ } \texttt{else}~P_5$).
\end{itemize}

From $H_0$ to $H_2$, the hypothesis space is increasingly expressive. \toolstun
invokes the \tool loop with $H_0$ first, and if it returns None then with $H_1$, and so on in that order. This has the nice property that it will return $f$ consistent with existing examples from $H_i$ where $i$ is the smallest possible. For correctness, each invocation with a new hypothesis $H_i$ uses a failure probability of $\frac{\delta}{3}$, so the total failure probability is bounded by $\delta$ (union bound).

\section{Evaluation}

We have shown that when an existing PBE synthesizer using \tool returns a synthesized program, the program generalizes, i.e., it is close to the target with high probability. Our evaluation focuses on two empirical utility goals in string-manipulation tasks:

\begin{enumerate}
\item \textbf{Accuracy:} Do our theoretical generalization guarantees improve the end accuracy of existing PBE synthesizers?

\item \textbf{Sample Size:} How many examples does \tool require to achieve provable generalization?

\end{enumerate}

Recall that \tool primarily extends existing synthesizers to control how many examples the synthesizer sees before stopping. 
We evaluate (a) \toolvsa, which builds on the PROSE framework, and (b) \toolstun, which is implemented on the \gstun synthesizer we designed. The vanilla \gstun synthesizer is around 4000 lines of C++ code. These vanilla versions of PBE synthesizers (without the \tool augmentation) serve as our baselines to measure improvements due to \tool.

We point out that PBE synthesizers for string programs often compete on computational overheads reported for producing {\em any} program that fits a given set of examples. Our accuracy criterion and our objective are completely
different---we want to check when a synthesizer produces a program close to a fixed
target program (the number of examples is not fixed). This is why we do not compare
to other baseline solvers which may be computationally faster~\cite{reynolds2019cvc,reynolds2015counterexample}, but are not designed to generalize to a target program.

\paragraph{Benchmarks.}
For \toolvsa, we considered $16$ common string-related programming tasks as target functions to synthesize. These are of similar style and complexity as those reported in FlashMeta paper~\cite{polozov2015flashmeta} such as changing the date format, extracting numbers or abbreviating words. Henceforth, we refer to these programs as \vsab, details of which are in the supplementary material~\cite{aaasup}. For \toolstun, we take the euphony benchmark from the PBE-Strings track of the \sygus 2019 benchmark~\cite{sygus-benchmark}\footnote{downloaded from \url{https://github.com/SyGuS-Org/benchmarks/tree/master/comp/2019/PBE_SLIA_Track/euphony}} which
contains $100$ PBE tasks with $2-16$ examples. 

The target programs are not available for those $100$ tasks, so we manually wrote them from the given examples from the benchmark. Out of the $100$ tasks, $10$ are for tasks that output boolean values which are not in the scope of our considered DSL. Further, we experimentally observed that our \gstun implementation scales up to component size $9$ (size of the longest straight-line programs before unification) within a reasonable computation of a day for all benchmarks to finish on our experimental setup (larger component size increases the program search space). So for the remaining $90$, we filtered out the ones that could not be manually constructed under component size $9$. This finally results in $59$ \sygus benchmarks which we call \stunb.

%\paragraph{Experimental setup.}
The generalization error tolerance for all experiments is set to $5\%$
($\epsilon=0.05$) and confidence parameter to $98\%$ ($\delta=0.02$) by
default. When comparing sample size for different $\epsilon$, we also run the
benchmark on ($\epsilon=0.02, \delta=0.02$) and ($\epsilon=0.1, \delta=0.02$). 
The default step size $k$ for the sampling phase is set to $1$ for \toolvsa and $20$ for \toolstun.
With these parameters, it is guaranteed with 
probability at least $98\%$ that when \tool stops, its synthesized program is going to have a generalization error of at most $5\%$.
We ran all experiments on Amazon EC2 Ubuntu 16.04 instance with 512GB RAM, 64-core 3.1GHz Intel Xeon processors where each benchmark runs $1$ core. All our experiments finished within 24 hours. For \toolvsa, 79\% runs finished in 1 minute and 100\% within 1 hour. For \toolstun, 75\% runs finished in 10 minutes and 97\% within 2 hours.

%\paragraph{Input Sampling.}
\tool works with any input distribution for creating I/O examples. We choose a distribution that is easy to generate and not specialized to each target program. Specifically, we simulate a black-box fuzzer for string inputs. All our \toolvsa evaluation reports an average over $32$ trials of each target program, and for each trial, we sample a string input as follows:
\begin{itemize}
    \item the string length is chosen uniformly at random from $8-16$;
    \item each character in the string is either chosen uniformly at random from the character set $\textsf{C}=\texttt{"A-Za-z0-9,.-;|"}$, or chosen as white-space with probability $15\times$ larger than the probability of any character in \textsf{C}.
\end{itemize}
We run each such input created on the target program to create the output. The input-output pairs are given to the synthesizer.

We evaluate \toolstun over $3$ trials for each target, as these programs are computationally heavier to synthesize. For each trial, we simulate a basic mutation-based fuzzer as the sampling distribution. Specifically, we take the input strings provided in the \sygus benchmark as seeds, and mutate them randomly as follows:
\begin{itemize}
    \item add a string of randomly chosen length up to size $10$ at a randomly chosen location of the seed string, with each character being a printable value\footnote{In Python, we use \texttt{string.printable} and remove the white space characters}; or
    \item remove a randomly chosen character from the seed string.
\end{itemize}

For programs with more than one input argument, we additionally add rules specifying whether one of the inputs is a substring of another input argument. We then randomly choose a substring from that input argument when synthesizing examples. For integer inputs, we randomly chose either an integer bounded by the length of one of the string input arguments or a randomly chosen integer from $(0,1000)$.
This ensures that the target program can be run on the mutated inputs without resulting in type errors or failure.

\subsection{Accuracy Improvement}
 
To evaluate whether our theoretical generalization guarantees translate into improved correctness, we check that the synthesized program is {\em correct}, i.e., syntactically or semantically equivalent to the target program. Our syntactic equivalence is confirmed automatically, and for semantic equivalence, we resort to manual inspection.
While \tool might produce programs that are close to the target as per its $\epsilon$-close guarantee, it is difficult to estimate closeness objectively. Therefore, we take a conservative approach and only report whether the synthesized program is correct. Programs that are "almost" or "close to" correct are reported as incorrect.

\toolvsa synthesizes $14/16$ programs semantically equivalent to the target program in all $32$ runs for $\epsilon=0.05,\delta=0.02$ which shows that \toolvsa is useful to synthesize common string manipulation programs. For one of the remaining benchmarks, the synthesized program is correct on $26/32$ runs. For $1$ benchmark, the synthesized program is correct on $7/32$ runs.
In total, \toolvsa produces correct programs in $481/512$ $(93.95\%)$ runs. 

We observe that the vanilla \gprose synthesizer (without \tool) consistently overfits when sample sizes are chosen arbitrarily smaller than mandated by \toolvsa. For instance, when we use exactly $4$ randomly chosen examples in each trial, the vanilla \gprose synthesizer produces incorrect programs on most of the $32$ runs for all target programs. Most of the synthesized programs overfit the examples: it is only correct on $176/512$ ($34.38\%$) runs in total. This confirms the importance of \tool's main objective: taking enough examples until the synthesized program is guaranteed to generalize with high probability. 

\toolstun synthesizes $53/59$ correct programs from the \stunb for $\epsilon=0.05,\delta=0.02$ for all $3$ runs. In total, this leads to $159/177$ ($89.83\%$) correct runs.
As a point of comparison, the vanilla version \gstun synthesizer, evaluated on the examples provided in the \sygus benchmark, produces correct programs for $36$/$59$ of the target benchmark.
\toolstun shows a $29\%$ improvement over the vanilla \gstun on the \sygus benchmark, synthesizing correctly an additional $17$ programs in all trials.
Further, this suggests that a significant number of \stunb programs in the benchmark do {\em not} have enough examples in the benchmark to provably generalize. Synthesizers, therefore, may need additional hints or assumptions to solve them correctly.

To analyze vanilla \gstun under the same input distribution as \toolstun, we further evaluate it on a fixed number of randomly chosen examples in $3$ trials. We use sample size of $4$ per trial following prior work~\cite{singh2015predicting,gulwani2011automating}. We find that vanilla \gstun synthesizes only $33/59$ correct programs in all $3$ runs (in total correct on $121/177$ runs), confirming that it often overfits.

\subsection{Sample Size Sufficient for Generalization}
\label{sec:eval-samplesize}

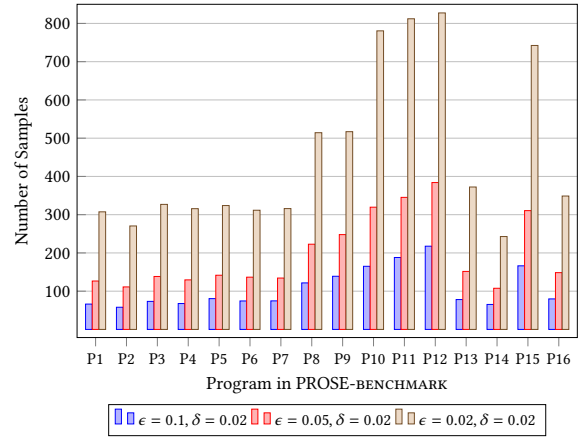
\begin{figure}
\centering
\resizebox{0.9\linewidth}{!}{%
\begin{tikzpicture}
\begin{axis}[
        yminorticks=false,
        xtick pos=left,
        ytick pos=left,
        % adjust the `width' a bit by keeping the default `height'
        width=1.2*\axisdefaultwidth,
        height=\axisdefaultheight,
        ylabel={Number of Samples},
        xlabel={Program in \vsab},
        % set appropriate `ymax' value so the `nodes near coords' fit to the plot
        % (adjusting the `ymin' value is just to make it look a little bit better)
        % there should be no gap between the bars in one group
        ybar=0pt,
        ytick={100,200,300,400,500,600,700,800},
        ymax=850,
        % use data from the table for the xticklabels
        xtick=data,
        xticklabels from table={\loadedtable}{id},
        x tick label style={font=\small},
        legend style={font=\small,at={(0.5,-0.2)}, anchor=north, legend columns=3},
        ymajorgrids,
        % to start the bars from the bottom edge of the plot
        % (otherwise they would start from 10^0
        %  borrowed from <http://tex.stackexchange.com/a/86688/95441)
        log origin=infty,
        % adjust the size of the bars so they don't overlap
        % (you can play with the numerator to change the gap between the groups)
        bar width=0.65/\NoOfCols,
        % enlarge the x limits so all of the bars are shown
        % (play with the value to adjust the gap on the sides of the plot)
        enlarge x limits={abs=0.6},
        % and position the legend outside of the plot to not overlap with the data
        % legend pos=outer north east,
        % add `nodes near coords'
        %nodes near coords={
        %    % because internally PGFPlots works with floating point numbers, we
        %    % change them to fixed point numbers
        %    \pgfkeys{
        %        /pgf/fpu=true,
        %        /pgf/fpu/output format=fixed,
        %    }%
        %    % check if numbers are greater than 1000 and if so, divide them by
        %    % 1000 to convert them from ms to s scale
        %    \pgfmathparse{
        %        ifthenelse(
        %            \pgfplotspointmeta < 1000,
        %            \pgfplotspointmeta,
        %            \pgfplotspointmeta/1000
        %        )
        %    }%
        %    % to now decide which of the two cases we have, we compare the
        %    % point meta value, but because `\ifnum' compares integers, we first
        %    % have to convert the fixed number to an integer
        %        \pgfmathtruncatemacro{\Y}{\pgfplotspointmeta}%
        %    \ifnum\Y<1000
        %        \pgfmathprintnumber{\pgfmathresult}
        %    \else
        %        \pgfmathprintnumber{\pgfmathresult}
        %    \fi
        %},
        % set the style of the `nodes near coords'
        %nodes near coords style={
        %    font=\tiny,
        %    rotate=90,
        %    anchor=west,
        %},
        %% as basis for the `nodes near coords' use the raw y values
        %point meta=rawy,
    ]
        % add the data rows
        \foreach \i in {1,...,\NoOfCols} {
            \addplot table [
                x expr=\coordindex,
                y index=\i,
                col sep=comma,
            ] {\loadedtable};
                % to automatically add the legend entries first extract the
                % column name and store it in `\colname'
                % (this is an undocumented command so far. I borrowed it from
                %  <http://tex.stackexchange.com/q/171021/95441>)
                    \pgfplotstablegetcolumnnamebyindex{\i}\of{\loadedtable}\to{\colname}
                % now you can add the legend entry
                % (because we are in a loop we have to use the "expanded" version)
                %\addlegendentryexpanded{\colname};
        }
      \addlegendentry{$\epsilon=0.1,\delta=0.02$}
      \addlegendentry{$\epsilon=0.05,\delta=0.02$}
      \addlegendentry{$\epsilon=0.02,\delta=0.02$}
    \end{axis}
\end{tikzpicture}
}
\caption{For most programs in the \vsab, \toolvsa synthesizes programs with provable generalization
  under $400$ examples for $\epsilon=0.02,\delta=0.02$. 
  For $\epsilon=0.1,\delta=0.02$, the \#samples drop to $58-218$ (average $107.22$).}
\label{fig:vsa-sample-count}
\end{figure}

\begin{figure}
\centering
\resizebox{0.9\linewidth}{!}{%
\pgfplotstableread[col sep=comma]{data/sgstun_avg_sample_size.csv}\csvtable
\begin{tikzpicture}
\begin{axis}[
        yminorticks=false,
        xtick pos=left,
        ytick pos=left,
        % adjust the `width' a bit by keeping the default `height'
        width=1.2*\axisdefaultwidth,
        height=\axisdefaultheight,
        ylabel={Number of Samples},
        xlabel={Program in \stunb},
        % set appropriate `ymax' value so the `nodes near coords' fit to the plot
        % (adjusting the `ymin' value is just to make it look a little bit better)
        % there should be no gap between the bars in one group
        ybar=0pt,
        ytick={100,200,300,400,500,600,700,800,900,1000,1100,1200,1300,1400,1500},
        ymin=0,
        ymax=1500,
        % use data from the table for the xticklabels
        xtick={1, 5,...,60},
        %xticklabels from table={\csvtable}{pid},
        x tick label style={font=\small}, %anchor=east, rotate=90},
        legend style={font=\small,at={(0.5,-0.2)}, anchor=north, legend columns=3},
        ymajorgrids,
        % to start the bars from the bottom edge of the plot
        % (otherwise they would start from 10^0
        %  borrowed from <http://tex.stackexchange.com/a/86688/95441)
        log origin=infty,
        % adjust the size of the bars so they don't overlap
        % (you can play with the numerator to change the gap between the groups)
        bar width=2.2pt,
        % enlarge the x limits so all of the bars are shown
        % (play with the value to adjust the gap on the sides of the plot)
        enlarge x limits={abs=0.6},
        % and position the legend outside of the plot to not overlap with the data
        % legend pos=outer north east,
        % add `nodes near coords'
        %nodes near coords={
        %    % because internally PGFPlots works with floating point numbers, we
        %    % change them to fixed point numbers
        %    \pgfkeys{
        %        /pgf/fpu=true,
        %        /pgf/fpu/output format=fixed,
        %    }%
        %    % check if numbers are greater than 1000 and if so, divide them by
        %    % 1000 to convert them from ms to s scale
        %    \pgfmathparse{
        %        ifthenelse(
        %            \pgfplotspointmeta < 1000,
        %            \pgfplotspointmeta,
        %            \pgfplotspointmeta/1000
        %        )
        %    }%
        %    % to now decide which of the two cases we have, we compare the
        %    % point meta value, but because `\ifnum' compares integers, we first
        %    % have to convert the fixed number to an integer
        %        \pgfmathtruncatemacro{\Y}{\pgfplotspointmeta}%
        %    \ifnum\Y<1000
        %        \pgfmathprintnumber{\pgfmathresult}
        %    \else
        %        \pgfmathprintnumber{\pgfmathresult}
        %    \fi
        %},
        % set the style of the `nodes near coords'
        %nodes near coords style={
        %    font=\tiny,
        %    rotate=90,
        %    anchor=west,
        %},
        %% as basis for the `nodes near coords' use the raw y values
        %point meta=rawy,
    ]
        % add the data rows
    \addplot table [x expr=\coordindex,y={sample_avg}] {\csvtable};
    \addlegendentry{$\epsilon=0.05,\delta=0.02$}
    \end{axis}
\end{tikzpicture}
}
\caption{For most programs in the \stunb, \toolstun synthesizes programs with
provable generalization under $500$ examples for $\epsilon=0.05,\delta=0.02$.
Only $11$ of these programs require $500-1400$ examples.}
\label{fig:stun-sample-count}
\end{figure}
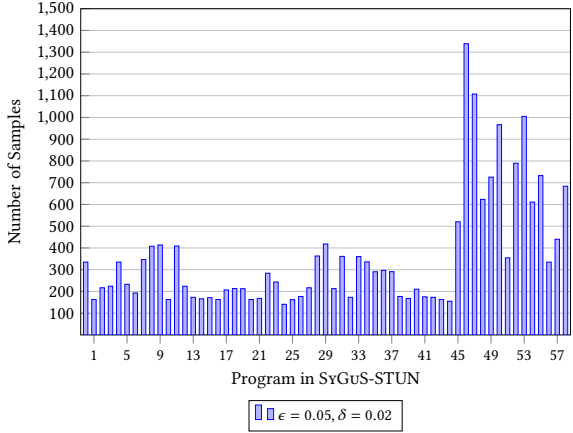

Our work provides empirical evidence that a modest number of examples suffice for provable generalization for the evaluated tasks.
Figure~\ref{fig:vsa-sample-count} shows that we need between $100$-$400$ examples (about $197$ on average) to
achieve $(\epsilon=0.05, \delta=0.02)$ generalization for the \toolvsa on the $16$
target programs evaluated. For a smaller error tolerance $(\epsilon=0.02, \delta=0.02)$, the observed
range of sample size becomes $200$-$900$. Note that sample size is sensitive to
$\epsilon$ as is theoretically expected---distinguishing between two functions
that behave almost identically using random sampling will require many samples.

Figure~\ref{fig:stun-sample-count} shows that we need between $140$-$1400$ examples (about $357$ on average) to
achieve $(\epsilon=0.05, \delta=0.02)$ generalization for the $59$ target programs evaluated with \toolstun. Most of the programs in the \sygus benchmark require under $500$ samples, with only $11$ programs requiring $500-1400$ I/O examples.

\subsection{Reduction in Program Search Space}
\label{sec:eval-hypspace}

Note that Theorem~\ref{thm:static-union-bound} does {\em not} predict how fast the procedure will converge to a generalized program, i.e., how fast the space of consistent programs will shrink after each example. This varies empirically given the task and examples seen. But, \tool internally estimates (conservatively) how many programs remain consistent after seeing each example. From this, two empirical findings which explain our other observations emerge. First, the program space shrinks {\em drastically} with the first few examples for nearly all benchmarks. Second, the number of examples required to generalize depends significantly on the expressiveness of the chosen hypothesis space.

\begin{figure}
  \centering
  \resizebox{0.75\linewidth}{!}{%
  \begin{tikzpicture}
  \begin{semilogyaxis}
    [
        xmin=-0.5,
        xmax=28,
        ymin=0.5,
        ymax=1e51,
        yminorticks=false,
        xtick pos=left,
        ytick pos=left,
        xlabel={Number of Samples},
        ylabel={Program Space Size},
        legend style={font=\small,at={(0.5,-0.2)}, anchor=north, legend columns=6},
    ]
    %\addplot table [x=id, y=P1, col sep=comma] {data/vsa_h_drop.csv};
    \addplot table [x=id, y=P2, col sep=comma] {data/vsa_h_drop.csv};
    %\addplot table [x=id, y=P3, col sep=comma] {data/vsa_h_drop.csv};
    %\addplot table [x=id, y=P4, col sep=comma] {data/vsa_h_drop.csv};
    \addplot[mark=x,color=black] table [x=id, y=P5, col sep=comma] {data/vsa_h_drop.csv};
    %\addplot table [x=id, y=P6, col sep=comma] {data/vsa_h_drop.csv};
    %\addplot[mark=triangle*, color=red]  table [x=id, y=P7, col sep=comma] {data/vsa_h_drop.csv};
    %\addplot table [x=id, y=P8, col sep=comma] {data/vsa_h_drop.csv};
    %\addplot table [x=id, y=P9, col sep=comma] {data/vsa_h_drop.csv};
    %\addplot table [x=id, y=P10, col sep=comma] {data/vsa_h_drop.csv};
    \addplot[mark=x,color=blue] table [x=id, y=P11, col sep=comma] {data/vsa_h_drop.csv};
    \addplot[mark=otimes, color=black] table [x=id, y=P12, col sep=comma] {data/vsa_h_drop.csv};
    \addplot[mark=diamond, color=brown] table [x=id, y=P13, col sep=comma] {data/vsa_h_drop.csv};
    %\addplot[mark=triangle*, color=black] table [x=id, y=P14, col sep=comma] {data/vsa_h_drop.csv};
    \addplot[mark=otimes, color=red] table [x=id, y=P15, col sep=comma] {data/vsa_h_drop.csv};
    %\addplot[mark=*, color=blue] table [x=id, y=P16, col sep=comma] {data/vsa_h_drop.csv};

    %\legend{P1,P10,P11,P12,P13,P14,P15,P16, P2,  P3,  P4,  P5,  P6,  P7,  P8,  P9}
    %\legend{P1, P2, P3, P4, P5, P6, P7, P8, P9, P10, P11, P12, P13, P14, P15, P16}
    %\legend{P3,  P4,  P5,  P7,  P9, P12}
    %\legend{P11, P12, P13, P15, P2, P5}
    \legend{P2, P5, P11, P12, P13, P15}
    \end{semilogyaxis}
\end{tikzpicture}
}
\caption{In \vsab, the program space shrinks $3.7\cdot10^2-10^{44}\times$ on average with the first $5$ examples, explaining why \toolvsa can provably generalize in modest number of samples for most programs in the benchmark.
}
\label{fig:vsa-hypothesis-space}
\end{figure}
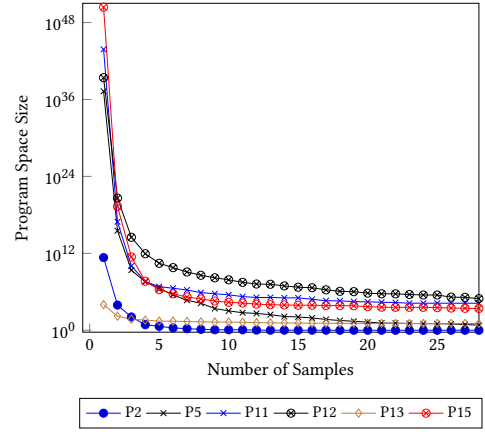

% start with main finding
Figure~\ref{fig:vsa-hypothesis-space} shows the size of the consistent program space, for $6$ representative programs in the \vsab computed by \toolvsa.  The Y-axis is a logarithmic scale. The full evaluation of the rest of the programs is in the supplementary material~\cite{aaasup}.
We choose these programs as they represent the largest, smallest, and average cases of the program space size after $25$ examples averaged over $32$ runs, as well as the largest and smallest program space size decrease after $5$ examples. In particular, P$15$ has the largest decrease on average in the program space size after $5$ examples from $10^{50}$ to $10^{6}$ %($\approx2\cdot10^{41}\times$) 
while P$13$ has the smallest decrease after $5$ examples from $9.7\cdot10^{3}$ to $27$. %($\approx7.3\cdot10^{2}\times$).
This observation explains why a small number of examples turn out to
be sufficient for generalization in this benchmark. It also shows
that reducing the consistent hypothesis space further, after the initial quick reduction,
becomes increasingly difficult with unbiased sampling.

The \sygus benchmark has target programs of 
different complexity (different number of conditionals). 
For \toolstun, the number of samples required to generalize depends
on target program's complexity---more complex programs require considering a larger original hypothesis space to be represented. However, the algorithm does not know the original hypothesis space. To use fewer samples, the algorithm chooses the smallest hypothesis space that still contains programs consistent with samples, because outputting more complex programs (more conditionals) requires choosing a larger hypothesis space for which more samples are needed. 

We use the target programs in $H_0$ as an example to show this phenomenon in Figure \ref{fig:stun-hypothesis-space}. The figure shows for each target program the sample size sufficient for $(\epsilon, \delta/3)$-generalization calculated by the 3 parallel \tool instances on $H_0$, $H_1$, and $H_2$.
We show that if \toolstun chooses a program in $H_0$, the number of examples is $141-419$. If the target program is in $H_0$ but the synthesizer chooses a program in $H_1$, the number of samples is larger by $1166.22$ on average. For example, for program $20$ in our $3$ runs, $213$ samples are sufficient to pick a program in $H_0$, but to return a program from $H_1$ or $H_2$ \toolstun requires around $1300$ and $2400$ examples, respectively.
This quantitatively shows that the sample size can vary by a large margin when considering more complex programs. 
Moreover, this result explains why choosing a simpler program first can require a smaller number of examples.

\begin{figure}
  \centering
  \resizebox{0.75\linewidth}{!}{%
  \begin{tikzpicture}
  \begin{axis}
    [
        xmin=0,xmax=46,
        ymin=0,
        ymax=3200,
        ymajorgrids,
        xtick={1, 5, ...,45},
        %at={(-0.3\textwidth,0)},
        %xticklabels from table={\csvtable}{pid},
        x tick label style={font=\small}, %anchor=east, rotate=90},
        yminorticks=false,
        xtick pos=left,
        ytick pos=left,
        xlabel={Program in \stunb},
        ylabel={Number of Samples},
        legend style={font=\small,at={(0.5,-0.2)}, anchor=north, legend columns=6},
    ]
    \addplot[only marks,mark=diamond*,mark options={fill=blue}] table [x expr=1+\coordindex, y={sample_H0}, col sep=comma] {data/stun_avg_sample_h0h1h2_first.csv};
    \addplot[ only marks,mark=*,mark options={fill=red}] table [x expr=1+\coordindex, y={sample_H1}, col sep=comma] {data/stun_avg_sample_h0h1h2_first.csv};
    \addplot[only marks,mark=triangle*,mark options={fill=brown},ycomb] table [x expr=1+\coordindex, y={sample_H2}, col sep=comma] {data/stun_avg_sample_h0h1h2_first.csv};
    \legend{$p \in H_0$, $p \in H_1$, $p \in H_2$}
    \end{axis}
    \end{tikzpicture}
}
\caption{For target programs with no condition ($t\in H_0$), choosing a program $p \in H_0$, versus a program with one condition ($H_1$) or two conditions ($H_2$) leads to provable generalization in less number of samples. }
\label{fig:stun-hypothesis-space}
\end{figure}
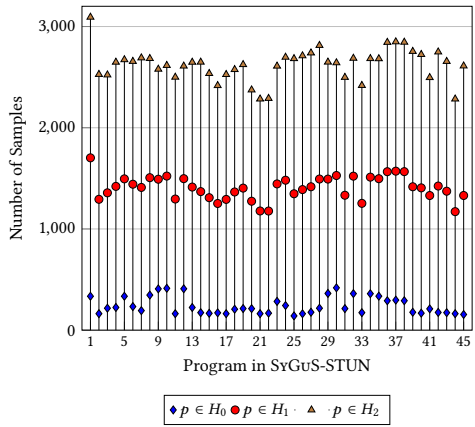

\section{Related Work}

The overfitting problem in learning programs from examples is known. Many different approaches have been proposed to tackle it (see Section~\ref{sec:intro}). One line of work proposes conditioning the search with program traces rather than just I/O examples~\cite{shin2018improving,ellis2017learning,chen2018execution}. Another line of work improves the input specification~\cite{laich2020guiding,drachsler2017synthesis} using domain-specific knowledge about the hypothesis space. Singh et al. propose to rank the synthesized functions based on distributional priors with machine learning~\cite{singh2015predicting}.  Similarly, several inductive synthesis techniques use deep learning to improve their search~\cite{balog2016deepcoder,polosukhin2018neural,zohar2018automatic}. Broadly speaking, these approaches are complementary to ours as they either require domain-specific priors or they learn program patterns from a dataset of similar programs~\cite{raychev2015Bigcode}. Moreover, none of them claims any generalization guarantees.

Several works suggest that larger test harnesses lead to promising improvements in synthesized programs~\cite{perelman2014TDS,laich2020guiding}, which is also observed in related domains of program repair~\cite{goues2019apr,xiang2020feedback} and invariant learning~\cite{blazytko2020Aurora}. Our work, motivated by these, provides the formal
bridge between test harness size and provable generalization. We also show that it translates into significantly improved accuracy on string-manipulating tasks, directly due to reduced overfitting.

Establishing generalization guarantees for synthesizers has been studied for over three decades. The PAC learnability framework has been introduced by Valiant~\cite{valiant1984PAC} for analyzing generalization from a computational perspective. Under this theory, the sample complexity required for generalization has been established for learning in propositional logic domain~\cite{valiant1984PAC,HAUSSLER199278,rivest1987DL}. The results have been extended to learning logic programs i.e., predicate logic domain~\cite{dzeroski1992pac,coheneff1994pac,cohenneg1994pac}. Some of the above bounds are limited to certain types of hypothesis spaces. Instead, Blumer \etal have given two sample complexity bounds 1) union bound~\cite{blumer1987occam} and 2) Vapnik-Chervonenkis Dimension bound~\cite{blumer1989learnability} that are more general. The union bound is used when the hypothesis space is finite and when the VC dimension of a model is difficult to estimate~\cite{lau2003programming,blanc2019topdown}. Whilst the above works have established the bounds in theory, their applicability in real-world synthesizers have been very limited. A recent work uses the VC dimension argument for synthesizing linear arithmetic functions for holes given in a sketch~\cite{drews2019efficient}. For many programming domains, such as for string-manipulating programs, it is difficult to compute VC dimensions.

Besides the PAC framework, another line of work towards generalization is through active learning. Some interactive synthesis systems have question selection mechanisms to find distinguishing input \cite{mayer2015user,wang2017interactive}. \citeauthor{ji2020question} further approximate optimal questions to resolve ambiguity in less number of samples \cite{ji2020question}. However, these approaches do not give generalization guarantees without assuming the existence of the target programs in the hypothesis space or a prior distribution over target programs, so they are orthogonal to our approach which works under minimal assumptions.

Outside of program synthesis, generalization has been extensively studied in machine learning. Our work bridges the two lines of inquiry that have evolved in parallel. Apart from PAC-style definitions based on sample complexity, generalization can be achieved using algorithmic stability~\cite{bousquet2002stability}. Bounds have been established for both convex optimizations and non-convex optimization algorithms, i.e., low sensitivity to small changes in inputs~\cite{hardt2016ICML,ben2017PAC,rivasplata2018PAC,mou2018SGLD,pensia2018Generalization}. These works leverage the properties of algorithms like stochastic gradient descent (SGD) and stochastic gradient Langevin dynamic (SGLD) in order to estimate the generalization bounds for a given number of samples. Adapting the framework of algorithmic stability to PBE-based synthesis is promising future work, but it is challenging. A direct adaptation, for example, would restrict learnt programs to be stable, for which small changes in outputs for small changes in inputs.
Generalization has been explored from other perspectives such as by bounding network capacity~\cite{neyshabur2017Gen} and overparametrization~\cite{kawaguchi2019Gen,zhu2019Gen,zou2019Gen} in machine learning literature,
which are also alternative starting points for studying generalization in program synthesis.

\section{Conclusion}

In this work, we exploit the theoretical connection between generalization and the numbers of examples used in programming-by-example synthesis. We provide the first principled approach that guarantees generalization with a modest number of examples in this regime. Key to this result is our mechanism for computing sample complexity on the fly. We show experimentally that this significantly reduces overfitting and improves accuracy for synthesizing string-manipulation programs, compared to approaches that use arbitrarily fewer examples.

\begin{acks}
  We thank Shiqi Shen, Shruti Hiray, and the anonymous reviewers for helpful feedback on this work. This work was supported by Crystal Centre at National University of Singapore,  a Singapore Ministry of Education Academic Research Fund Tier 1 (WBS number R-252-000-B50-114), and a research grant with WBS number R-252-000-B14-281. All opinions in this work are solely those of the authors.
\end{acks}

%\appendix

\bibliographystyle{ACM-Reference-Format}
\bibliography{paper}

\end{document}